\documentclass{article}

\usepackage{microtype}
\usepackage{graphicx}
\usepackage{subcaption}
\usepackage{booktabs} 
\usepackage{tikz}
\usetikzlibrary{arrows.meta,positioning,fit,decorations.pathreplacing,backgrounds}

\usepackage{hyperref}

 \usepackage[accepted]{icml2026}

\usepackage{amsmath}
\usepackage{amssymb}
\usepackage{mathtools}
\usepackage{amsthm}

\usepackage[capitalize,noabbrev]{cleveref}

\theoremstyle{plain}
\newtheorem{theorem}{Theorem}[section]

\newtheorem{lemma}[theorem]{Lemma}

\theoremstyle{definition}
\newtheorem{definition}[theorem]{Definition}

\theoremstyle{remark}

\usepackage[textsize=tiny]{todonotes}

\newcommand{\notelist}{}
\newcommand{\VC}{\mathrm{VC}}
\NewDocumentCommand{\addnote}{mm}{
    \expandafter\gdef\expandafter\notelist\expandafter{%
        \notelist
        \noindent \hyperlink{#2}{#1}\par
    }
    \hypertarget{#2}{#1}%
}

\DeclareMathOperator{\dist}{dist}
\newcommand{\set}[1]{\left\{ #1 \right\} }

\definecolor{green-700}{HTML}{15803D}

\newcommand{\eat}[1]{}

\newcommand{\Exp}[1]{\mathbb{E}\left[#1\right]}

\newcommand{\Psymb}{\mathbb{P}}

 \DeclareMathOperator*{\ProbOp}{\Psymb}

\renewcommand{\Pr}[1]{\ProbOp\left(#1\right)}

\DeclareMathOperator{\Poi}{Poi}

\renewcommand{\epsilon}{\varepsilon}

\newif\ifnotes\notestrue

\ifnotes
\usepackage{color}
\definecolor{mygrey}{gray}{0.50}
\newcommand{\notename}[2]{{\textcolor{blue}{\footnotesize{\bf (#1:} {#2}{\bf ) }}}}

\else

\newcommand{\notename}[2]{{}}

\fi

\icmltitlerunning{Accuracy Collapse of Embeddings under Dimensionality Mismatch}

\begin{document}

\twocolumn[
  \icmltitle{Provable Accuracy Collapse in Embedding-Based Representations\\under  Dimensionality Mismatch}



  \icmlsetsymbol{equal}{*}

  \begin{icmlauthorlist}
    \icmlauthor{Dionysis Arvanitakis}{northwestern}
    \icmlauthor{Vaggos Chatziafratis}{ucsc}
    \icmlauthor{Yiyuan Luo}{ucsc}
  \end{icmlauthorlist}

  \icmlaffiliation{ucsc}{University of California, Santa Cruz, CA, USA}
  \icmlaffiliation{northwestern}{Northwestern University, IL, USA}

  \icmlcorrespondingauthor{Dionysis Arvanitakis}{dionarva@u.northwestern.edu}

  \icmlkeywords{Machine Learning, ICML}

  \vskip 0.3in
]



\printAffiliationsAndNotice{}  

\begin{abstract}

Embedding-based representations in Euclidean space $\mathbb{R}^d$ are a cornerstone of modern machine learning, where a major goal is to use the \emph{smallest dimension} that faithfully captures data relations. In this work, we prove sharp dimension--accuracy tradeoffs and identify a fundamental information-theoretic limitation: unless the embedding dimension $d$ is chosen close to the ground-truth dimension $D$,  accuracy undergoes a sudden collapse. Our main result shows that this phenomenon arises even in standard contrastive learning settings, where supervision is limited to a set of $m$ anchor--positive--negative triplets $(i,j,k)$ encoding distance comparisons $\mathrm{dist}(i,j) < \mathrm{dist}(i,k)$. Specifically, given triplets realizable by an unknown ground-truth embedding in $D$ dimensions, we prove that there exists constant $c < 1$, such that \emph{every embedding of dimension at most $cD$  violates half of the triplets}, yielding accuracy as low as a trivial one-dimensional solution that ignores the input. We complement our information-theoretic bounds with strong computational hardness results: under the Unique Games Conjecture, even if the given triplets are nearly realizable in $D=1$ dimension, no polynomial-time algorithm---\textit{regardless of its dimension}---can achieve accuracy above the trivial $50\%$ baseline.

\end{abstract}


\section{Introduction}

Embedding-based representations, particularly in Euclidean space $\mathbb{R}^d$, have become a central component of modern machine learning pipelines~\citep{mikolov2013efficient,mikolov2013distributed,bengio2013representation,lecun2015deep}. By mapping data into $d$-dimensional vectors while ensuring that semantically similar items remain close in the embedding space, such representations enable a unified approach to semantic search and reasoning across diverse modalities, including graphs, text, audio, images, and code~\citep{grover2016node2vec,he2016deep,chen2020simple}.

A key design parameter in embedding-based approaches is the \emph{representation dimensionality} $d$, with common choices ranging from a few hundred to a few thousand dimensions~\citep{radford2018improving,comanici2025gemini}. For example, modern foundation models often maintain very high-dimensional latent spaces (e.g., 3072), while downstream embeddings are typically projected into lower-dimensional vectors (e.g., 128) for efficiency and scalability, or are fine-tuned depending on the task at hand~\citep{kusupati2022matryoshka}.

In this work, we aim to highlight the critical role of embedding dimensionality by characterizing fundamental limitations on how the choice of $d$ impacts model accuracy.  
On the one hand, higher-dimensional embeddings offer greater expressive power, enabling finer-grained distinctions between data points. On the other hand, increasing dimensionality incurs substantial computational and operational costs: memory usage, storage requirements, and search latency all grow with $d$, while inference becomes slower and more energy-intensive. Unfortunately, despite the widespread use of embedding-based models and explicit efforts to fine-tune dimension~\citep{kusupati2022matryoshka}, it remains unclear how accuracy (on training or downstream tasks) is impacted by the dimension $d$ and what the fundamental limitations are. 

For example, if a model's dimension is increased, say from 256 up to 512, should we expect to see a significant improvement in performance? Recent empirical works on dimension truncation consistently observe across 6 state-of-the-art text encoders and 26 downstream tasks~\citet{takeshita2025randomly,tsukagoshi2025redundancy}, that while randomly removing up to 50\% of embedding dimensions results in only a minor drop in performance (less than 10\%), beyond a certain truncation threshold ($\sim90\%$), \textit{accuracy drops quite rapidly}.

\subsection{Our Setting: Contrastive Triplet Embeddings}
Our starting point for our work is to investigate the aforementioned rapid deterioration of model accuracy, whenever the embedding dimension crosses a certain threshold, as it currently lacks theoretical grounding for why and when it might occur. Towards formally understanding this behavior and to estimate the threshold, we focus on a standard contrastive learning setting, based on the popular anchor-positive-negative paradigm used in the Triplet Loss and NCE Loss~\citep{gutmann2012noise,schroff2015facenet,saunshi2019theoretical,vankadara2023insights,avdiukhin2024embedding,alon2024optimal}. 
 
In this context, we are given a collection $\mathcal T$ of $m$ \textit{triplet comparisons} of the form ``item $i$ is more similar to $j$ than to $k$,'' indicating that distances in the final embedding should satisfy $\mathrm{dist}(i,j) < \mathrm{dist}(i,k)$.\footnote{We use $(i,j,k)$ or the more common notation $(i,j^{+},k^{-})$ to specify this distance comparison.} The accuracy measure here corresponds to the probability that the embedding-induced ranking agrees with such pairwise relevance constraints, or equivalently, we want to find embeddings that agree with as many of the triplets as possible:

\begin{definition}[Accuracy]\label{def:acc} Let $V = \{1,\dots,n\}$ be a set of items and $\mathcal{T} \subseteq V \times V \times V$, be a set of $m$ triplets $(i,j,k)$. Given an embedding $f : V \to \mathbb{R}^d$, we say that a triplet $(i,j,k)$ is \emph{satisfied} if $\|f(i) - f(j)\|_2 < \|f(i) - f(k)\|_2$ and \emph{violated} otherwise. The \emph{accuracy} of the embedding is defined as the fraction of satisfied triplets in $\mathcal{T}$ ($|\mathcal{T}|=m$):
\[
\mathrm{acc}(f;\mathcal T)
=
\tfrac{1}{m}\hspace{-0.3cm}
\sum_{(i,j,k)\in \mathcal{T}}
\mathbf{1}\!\left[
\|f(i) - f(j)\|_2 < \|f(i) - f(k)\|_2
\right]
\]
\end{definition}

The main driving question behind our work is:
\begin{center}
   \emph{Given $m$ triplets of the form ``$i$ is closer to $j$ than to $k$,'' how does the embedding dimension $d$ affect the accuracy?} 
\end{center}

There are three main reasons why we focus on such contrastive learning tasks, where we must preserve distance \textit{comparisons}, rather than distance \textit{lengths} as in a metric embedding~\citep{bourgain1985lipschitz,linial1995geometry,dasgupta2003elementary,indyk2001algorithmic,larsen2017optimality}.  First, as we will see, our main results provide information-theoretic and computational \textit{limitations} of embedding-based representations, so focusing on ``easier'' embedding tasks makes our lower bounds stronger. Second, contrastive triplet tasks are ubiquitous in metric learning and ordinal embeddings~\citep{bilu2005monotone,alon2008ordinal,vankadara2023insights,chatziafratis2024dimension,alon2024optimal,avdiukhin2024embedding}, as they capture nearest-neighbor applications and various retrieval tasks. For example, in document retrieval, a triplet $(i,j,k)$ specifies that for query $i$, item $j$ should be ranked ahead of item $k$. Third, empirical works by~\citet{gutmann2012noise,schroff2015facenet,saunshi2019theoretical,saunshi2022understanding} have established that models trained with various continuous/smooth proxy objectives for triplet accuracy achieve high performance on downstream tasks, so shedding light on dimension-accuracy tradeoffs is well-motivated.

\subsection{Our Contributions}

As we vary the dimension $d$, we care both 
about \textit{information-theoretic} limitations in triplet embeddings, i.e., what is the best achievable accuracy by \textit{any} model of dimension $d$, independently of the optimization method or loss function or architecture used, and about \textit{computational} limitations, i.e., can we efficiently find an embedding with high accuracy.

\paragraph{Trivial $50\%$-Baseline and Accuracy Collapse.}  Observe that finding an embedding with $50\%$ accuracy is always trivially achievable, even using one-dimension:

\begin{definition} [Trivial $50\%$-Baseline] Let $f : V \to \mathbb{R}$ be a one-dimensional embedding, where $f(i)$ is sampled i.i.d.\ from a continuous distribution on $\mathbb{R}$. In particular, $f$ is independent of the given triplet constraints. \end{definition}

Such an embedding entirely ignores the given triplets, yet it achieves $50\%$ accuracy: for a fixed  $(i,j,k)$, symmetry implies that $\Pr{|f(i)-f(j)| < |f(i)-f(k)|} = \tfrac12$. An important concept for our work is that of \textit{accuracy collapse}, capturing the scenario where no $d$-dimensional embedding can satisfy more triplets than this trivial $50\%$-baseline.


\paragraph{Information-Theoretic Limitations.} Our first result helps explain empirically-observed sharp drops in accuracy~\citep{takeshita2025randomly,tsukagoshi2025redundancy}, via an information-theoretic lower bound for triplet embeddings under dimension mismatch up to a \textit{constant} factor:

\begin{theorem}[Dimension-induced accuracy collapse for realizable triplets]\label{th:main1} For every integer $D \ge 240$, and for every $\epsilon>0$, there exists a collection of triplet constraints $\mathcal T \subseteq V \times V \times V$ with the following properties: 
\vspace{-0.2cm}
\begin{enumerate} 
\item (\emph{Realizability}) There exists an embedding $f^\star : V \to \mathbb{R}^D$ that satisfies all triplets in $\mathcal T$. 
\item (\emph{Accuracy Collapse}) There exists constant $c=c(\epsilon)<1$, such that for any embedding $f : V \to \mathbb{R}^d$ with $d \le c\cdot D$, the fraction of satisfied triplets is at most \[ \mathrm{acc}(f;\mathcal T) \le \frac{1}{2} + \epsilon\] 
\end{enumerate} 
\end{theorem}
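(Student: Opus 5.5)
We will prove the theorem with a counting (union bound) argument. The idea is to compare the number of sign patterns that low-dimensional embeddings can realize against a random choice of orientations for a large family of triplets, all of which are realizable in dimension $D$.

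\textbf{Step 1: construction.} Take $n$ points and a ground-truth embedding $f^\star: V\to\mathbb{R}^D$ in general position, for instance i.i.d.\ Gaussian vectors. Fix a base family $\mathcal{B}$ of $m$ unordered triplets $\{i;j,k\}$ (anchor $i$, candidates $j,k$) with $m=\Theta(nD)$ suitably large. Each base triplet can be oriented as $(i,j,k)$ or $(i,k,j)$. We want that for a uniformly random orientation vector $\sigma\in\{\pm1\}^m$ there is some $f^\star$ in $\mathbb{R}^D$ realizing it.

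The cleanest way to guarantee this is to make the base triplets use ``private'' information. For example, give each triplet its own anchor coordinate block, or use a Gram-matrix argument: since $\|x_i-x_j\|^2-\|x_i-x_k\|^2$ is affine in the Gram entries, we can select a family of $m\approx nD/C$ triplets whose comparison functionals are linearly independent on the space of rank-$D$ Gram matrices. Then every sign pattern is realizable by a small perturbation around a generic configuration.

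Concretely, I will attempt a local argument. Around a generic $X_0\in\mathbb{R}^{n\times D}$, the map $X\mapsto (\|x_i-x_j\|^2-\|x_i-x_k\|^2)_{t\in\mathcal B}$ is a submersion onto $\mathbb{R}^m$ whenever $m\le nD-\binom{D+1}{2}$ and the Jacobian has full rank. We also choose triplets that are near-ties at $X_0$, i.e.\ $\|x_i-x_j\|=\|x_i-x_k\|$, so each value can be pushed to either sign. Near-ties can be produced by picking the points $j,k$ on spheres around $i$. Enforcing many simultaneous ties while keeping full rank is the delicate part; a dimension count shows the tie set is a manifold of codimension $m$, which is fine.

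\textbf{Step 2: upper bound on patterns in dimension $d$.} By Warren's theorem (or Milnor--Thom), the number of sign patterns that the $m$ degree-2 polynomials in $nd$ variables can take is at most $(8e\cdot 2m/(nd))^{nd}$. So embeddings in $\mathbb{R}^d$ realize at most $N_d\le 2^{O(nd\log(m/nd))}$ patterns.

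\textbf{Step 3: union bound.} For a fixed pattern $\tau$, a random $\sigma$ agrees with $\tau$ on more than $(1/2+\epsilon)m$ coordinates with probability at most $e^{-2\epsilon^2 m}$ by Hoeffding. Hence
\[
\Pr{\exists f:V\to\mathbb{R}^d,\ \mathrm{acc}(f;\mathcal T_\sigma)>\tfrac12+\epsilon}\le N_d\, e^{-2\epsilon^2 m}.
\]
With $m=\alpha nD$ this is less than $1$ once $C\,nd\log(2e\alpha D/d)<2\epsilon^2\alpha nD$. That holds for $d\le cD$ with $c=c(\epsilon)$ small, since $(d/D)\log(D/d)\to0$. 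Fixing a good $\sigma$ gives $\mathcal T$, and Step 1 supplies $f^\star$.

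\textbf{Main obstacle.} The difficulty is Step 1: showing that a fixed family of $\Theta(nD)$ triplets is fully shatterable in $\mathbb{R}^D$ with constant $\alpha$ independent of $D$. The ratio $m/(nD)$ must be a constant, because that is what makes $c$ independent of $D$. The bound $D\ge240$ presumably comes from the constants in this step. My first attempt would be a explicit block construction: each anchor $i$ controls $\Theta(D)$ triplets $(i,j_\ell,k_\ell)$ via independent coordinates of $x_i$. The anchor coordinates move linearly, and the comparisons $\langle x_i, x_k-x_j\rangle$ are linear in $x_i$, with the vectors $x_k-x_j$ linearly independent. This yields $D$ arbitrary signs per anchor, so $m=nD$ up to constants, provided the anchors' roles as candidates for other triplets do not interfere; for that I would use a bipartite split into anchors and candidates.
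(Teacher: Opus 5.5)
Your proof is correct, and it takes a genuinely different route from the paper's.

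\textbf{Your Step 1 is fine as it stands.} The step you flag as the main obstacle is settled by the bipartite block construction at the end of your proposal. You can drop the Gram-matrix/submersion/near-tie discussion, which as written is not an argument: the condition $m\le nD-\binom{D+1}{2}$ is only necessary for full rank, and you never establish full rank at a point where all $m$ comparisons tie.

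Concretely, take candidates $y_0=0$ and $y_\ell=e_\ell$ for $\ell=1,\dots,D$, plus $A\ge D+1$ anchors, with base triplets $\{a;y_0,y_\ell\}$. Since $\|x_a-y_0\|^2-\|x_a-y_\ell\|^2=2x_{a,\ell}-1$, choosing each $x_a\in\{0,1\}^D$ realizes any orientation with no ties. So these $m=AD\ge nD/2$ triplets are shattered in $\mathbb{R}^D$, for every $D\ge 1$. In general the comparison is affine, $2\langle x_a,y_k-y_j\rangle+\|y_j\|^2-\|y_k\|^2$, rather than linear as you wrote, but invertibility of the difference vectors absorbs the offset.

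In Steps 2--3 the only missing detail is this: an embedding with ties can be perturbed to a generic one without losing any strictly satisfied triplet, so counting strict sign patterns via Warren suffices. Your union bound then gives $c(\epsilon)$ of order $\epsilon^2/\log(1/\epsilon)$.

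\textbf{How the paper argues instead.} The paper uses a uniformly random instance: $m=c_1Dn$ random triples with random orientations. It proves realizability indirectly, in two stages.
\begin{enumerate}
\item The digraph on $\binom{V}{2}$, with an arc $\{x,y\}\to\{x,z\}$ for each triplet, is acyclic with high probability when $D=o(\sqrt n)$. By Bilu--Linial, the instance is then realizable in $n$ dimensions.
\item The arboricity-based reduction of Avdiukhin et al.\ (dimension $4\rho(G_c)$), combined with an arboricity bound for random multigraphs, brings this down to $D$.
\end{enumerate}
This realizability argument is where the paper's side conditions $D\ge 240$ and $D=o(\sqrt n)$ come from. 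For collapse, the paper cites the $O(nd)$ VC bound of Alon et al.\ together with agnostic uniform convergence. That rests on the same kind of sign-pattern count as yours, but yields $d=c_2\epsilon^2D$ without your logarithmic loss.

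\textbf{What each route buys.} Your argument is self-contained and elementary, and needs no relation between $D$ and $n$. It makes realizability deterministic for every orientation: it simply pits a shattered set of size $\Omega(nD)$ in dimension $D$ against the $2^{O(nd\log(D/d))}$ sign patterns available in dimension $d$. The paper's route gives an average-case statement, via its theorem on random instances. There, collapse together with $D$-realizability is typical for uniformly random triplet instances of density $\Theta(Dn)$, rather than an artifact of a specially designed star-like instance, and the $\epsilon$-dependence is sharper.
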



Interestingly, our result corroborates recent empirical works~\citep{takeshita2025randomly,tsukagoshi2025redundancy} where aggressive dimensionality truncation (roughly $\sim90\%$) leads to severe accuracy drop, as it positions the critical threshold within a \textit{constant factor} of the dimension $D$ of the fully-expressive model. This implies that for certain tasks, even a constant approximation to the ground-truth dimension $D$ does not yield any better accuracy guarantees than the trivial 1-dimensional embedding. Another consequence of our result is that there exist certain tasks, for which augmenting the dimension, say from 256 to 512, will not lead to any measurable improvement, and this is independent of the optimization methods used.

In fact, our result holds even if we allow the ground-truth\footnote{For realizable instances, we refer to any embedding with perfect accuracy $\mathrm{acc}(f^\star; \mathcal{T}) = 1$  as the \emph{ground-truth} (or fully expressive) embedding, and to the smallest such dimension $D$ as the \emph{ground-truth dimension} of the instance.} dimension $D$ to grow with the instance size $|V|=n$, as long as $D=o(\sqrt n)$. Moreover, our lower bounds can be extended to the case of \textit{quadruplet} comparisons studied in ordinal embeddings~\citep{bilu2005monotone,alon2008ordinal,vankadara2023insights}, where $(i,j,k,l)$ indicates that $\|f(i) - f(j)\|_2 < \|f(k) - f(l)\|_2$, even for $D=o(n)$.

The main takeaway from our information-theoretic lower bound is that embedding dimension acts as a sharp bottleneck, independently of the optimization method, loss function or model architecture: \textit{below a certain constant fraction of the ground-truth dimension $D$, embeddings suffer from accuracy collapse, even on realizable instances}.



    


\paragraph{Computational Hardness of Approximation.} 
We also study \textit{non-realizable} instances, where triplet comparisons may contain errors (agnostic setting) and hence there is no embedding (in any dimension) that satisfies all triplets. In our previous result, the realizability assumption isolates representational limitations imposed by dimension alone from issues related to computational constraints, noise, optimization method used, or various other model misspecifications.

Since it is NP-complete to check realizability of a given set $\mathcal T$ of $m$ triplets even for $D=1$~\citep{opatrny1979total,fan2020learning,avdiukhin2024embedding}, the optimization goal becomes to find an embedding that approximates the accuracy of an optimal embedding, i.e., given a target dimension $d$, the goal is to find an embedding $f : V \to \mathbb{R}^d$ that \textit{maximizes} the fraction of satisfied triplets. Surprisingly, there is currently no better approximation algorithm than the trivial $50\%$-baseline in 1-dimension, even if the algorithm is allowed to use higher dimensions $d>1$.  Our second result settles the approximability of the problem:

\begin{theorem}[Computational Hardness of Approximation]\label{th:main2}
    Assuming the Unique Games Conjecture, for every $\varepsilon>0$, it is NP-hard to distinguish between triplet instances $\mathcal{T}$ that admit an embedding (in any dimension) satisfying at least a $(1-\varepsilon)$-fraction of the triplets and instances for which no embedding satisfies more than a $(\tfrac{1}{2}+\varepsilon)$-fraction of the triplets.
\end{theorem}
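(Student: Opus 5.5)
The natural route is a reduction from a known UG-hard \emph{ordering} CSP: \textsc{Betweenness}. The known result, due to Guruswami, Håstad, Manokaran, Raghavendra and Charikar (``Every permutation CSP of arity 3 is approximation resistant''), is as follows. Under the Unique Games Conjecture, for every $\delta>0$ it is NP-hard to distinguish Betweenness instances with two kinds of value. In the completeness case, some linear ordering satisfies a $(1-\delta)$-fraction of the constraints ``$j$ lies between $i$ and $k$''. In the soundness case, every ordering satisfies at most a $(\tfrac13+\delta)$-fraction, which is the random-ordering value. The triplet problem is itself a kind of ordering CSP only in dimension $1$. So the plan has two parts: encode each betweenness constraint by a small gadget of triplets, and argue that in \emph{any} dimension the gadget's accuracy is controlled by an ordering of the points. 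I would work with a cleaner source: the UG-hardness of the general ``triplet/ordering'' predicate given by the same approximation-resistance theorem. Take the 3-ary ordering predicate $P(i,j,k)=\mathbf 1[\pi(j)<\pi(k)$ given anchor $i$ sits at one end$]$, or a suitable arity-3 permutation predicate whose random value is $\tfrac12$. The constraint $\mathrm{dist}(i,j)<\mathrm{dist}(i,k)$ on a line is satisfied with probability exactly $\tfrac12$ under a random order. Approximation resistance of every arity-3 permutation CSP then yields hardness of $(1-\delta)$ versus $(\tfrac12+\delta)$ for the one-dimensional version of our problem.

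\textbf{Completeness} is immediate. A $(1-\delta)$-good ordering, placed on the real line, is a one-dimensional embedding, hence an embedding in any dimension, with accuracy $\ge 1-\delta$.

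\textbf{Soundness} is the main obstacle. We must show that if every ordering has value at most $\tfrac12+\delta$, then so does every embedding in $\mathbb{R}^d$ for arbitrary $d$. The plan is to reduce a high-dimensional embedding to a distribution over orderings by random projection. Given $f:V\to\mathbb{R}^d$, pick a random direction $g$, or more flexibly a random threshold/ray structure. We need $\Pr{\text{triplet satisfied by projection}}$ to be at least roughly $\tfrac12+\Omega(\text{advantage of }f)$, so that a $(\tfrac12+\epsilon)$-good embedding yields a $(\tfrac12+\epsilon')$-good ordering.

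Plain projection does not preserve a triplet deterministically. Instead I would use the following structure of the UG-hardness proof. Soundness there is argued for every assignment, via invariance and influence decoding. It therefore extends to ``assignments'' taking values in any space for which the local constraint is a low-degree-friendly function. Concretely, I would rerun the Raghavendra-style dictatorship test. The test's soundness reduces to the Gaussian/noisy setting, where a triplet on $\mathbb{R}^d$-valued functions of correlated Gaussian inputs has acceptance probability at most $\tfrac12+o(1)$ when the functions have low influences. This holds by symmetry: swapping the roles of $j$ and $k$ in the test distribution preserves the distribution and flips the predicate, up to ties of measure zero.

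The key step is therefore designing the test distribution on $(i,j,k)$ to be \emph{symmetric under exchanging $j$ and $k$} in the low-influence limit, while remaining correlated enough in the dictator case for completeness. Adding noise and folding should give this. The invariance principle for vector-valued functions then transfers the bound to the actual embedding. I expect this vector-valued invariance step, with $d$ unbounded, to need the most care. I would handle it by noting that the predicate depends only on two real quantities, $\|f(i)-f(j)\|^2-\|f(i)-f(k)\|^2$. That expression is a bounded-degree polynomial in inner products, so a multidimensional invariance principle applies without dependence on $d$.
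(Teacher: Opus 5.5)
\textbf{The proposal has a genuine gap at both ends.}

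\emph{The one-dimensional step fails as stated.} The constraint $|x_i-x_j|<|x_i-x_k|$ is not a permutation predicate. When $x_j<x_i<x_k$ (or $x_k<x_i<x_j$), its truth value depends on the sizes of the gaps, not on the relative order of the three points. So approximation resistance of arity-3 ordering CSPs does not apply to it. Your completeness claim (an ordering placed on the line achieves its ordering value) also fails: it holds only if the anchor of every triplet sits at an end of the line, and nothing in your construction enforces that.

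\emph{The soundness step for arbitrary $d$ is deferred, not proved.} You hand it to a re-run of a Raghavendra-style dictatorship test with a vector-valued invariance principle. Two key assertions there are stated without proof. First, that invariance holds with no dependence on $d$, for a discontinuous threshold of a quadratic form in unbounded $\mathbb{R}^d$-valued functions, where $d$ can effectively be as large as the number of items. Second, that some test distribution is $j\leftrightarrow k$ symmetric in the Gaussian limit yet accepts dictators with probability $1-\epsilon$. As written this is a research program, not a proof.

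\textbf{The missing idea is to give every triplet one common anchor, which removes the geometry entirely.} This is what the paper does.
\begin{itemize}
\item Reduce from Maximum Acyclic Subgraph. Guruswami, Manokaran and Raghavendra show it is UG-hard to distinguish MAS value at least $1-\epsilon$ from at most $\tfrac12+\epsilon$.
\item Add a new item $S$. For each arc $u\to v$, add the triplet $(S,u,v)$.
\item Soundness: take any $f$ into $\mathbb{R}^d$, for any $d$, and set $r_f(w)=\|f(w)-f(S)\|_2$. The triplet $(S,u,v)$ is satisfied iff $r_f(u)<r_f(v)$. Sort by radius, breaking ties arbitrarily; ties are violations for $f$, so this can only help the ordering. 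The resulting ordering has MAS value at least $\mathrm{acc}(f;\mathcal T)$.
\item Completeness: an ordering $\pi$ gives the one-dimensional embedding $f(S)=0$, $f(v)=\pi(v)$. This puts $S$ at the end of the line, and its accuracy is exactly $\mathrm{val}(\pi)$.
\end{itemize}
So the optimal accuracy equals the MAS optimum in every dimension, and completeness and soundness transfer verbatim. This is the ``anchor at one end'' effect you were reaching for. A shared anchor gets it for free in all dimensions, because only the radii around $S$ matter, and no invariance principle is needed.
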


In other words, no polynomial-time algorithm can guarantee accuracy exceeding $\tfrac12+\varepsilon$ on triplet embedding instances, even when there exists an embedding achieving accuracy at least $1-\varepsilon$. We emphasize that this hardness of approximation holds even for near-realizable instances in $D=1$, and is \textit{independent of the dimensionality} used by the algorithm.

\subsection{Further Related Work} 

Perhaps a first attempt towards dimensionality reduction would be to apply standard metric embedding tools, such as the Johnson-Lindenstrauss lemma that yields dimension $d=O(\log n/\epsilon^2)$~\cite{dasgupta2003elementary}. However, due to the inevitable $(1\pm\epsilon)$-distortion, it is well-known that it fails for ordinal embedding settings, where we care to preserve rankings of distances, as they may flip almost all triplet (or quadruplet) comparisons~\citep{alon2008ordinal}. 

Prior to our work, theoretical findings on ordinal embeddings concerned special cases~\citep{bilu2005monotone,fan2020learning,chatziafratis2024dimension,avdiukhin2024embedding,alon2024optimal}, and to the best of our knowledge we are the first to provide tight characterizations for dimension-vs-accuracy proving the accuracy collapse phenomenon down to the $50\%$-baseline. Specifically, prior works
by~\citet{bilu2005monotone,chatziafratis2024dimension,avdiukhin2024embedding} studied realizable instances, and showed that to preserve all triplet (or quadruplet) comparisons, dimension $d=O(\min\{n-1,\sqrt m\})$ is always sufficient, and $d=\tfrac n2$ or $d=\Omega(\sqrt m)$ may be needed in the worst-case. Note that in these works $d$ is prohibitively large for practical considerations. In contrast, we focus on the general case where we aim to obtain good accuracy relative to some user-specified ground-truth dimension $D$. Other positive results by~\citet{fan2020learning} study dense instances with $m=\Omega(n^3)$ triplets on the line ($D=1$) and provide a PTAS, while~\citet{alon2024optimal} study PAC-learnability/sampling complexity for small generalization error. Recent empirical works compare different ordinal embedding methods~\citep{vankadara2023insights}, or highlight limitations in retrieval applications~\cite{weller2025theoretical}. 


Regarding approximation and computational complexity, the inability to beat the trivial random baseline is an intriguing phenomenon in the theory of approximation algorithms, formalized as \emph{approximation resistance} by Håstad's celebrated work~\citep{haastad2001some}. Notably, problems such as MAX-3SAT and other Constraint Satisfaction Problems (CSPs) are approximation resistant~\citep{hast2005beating,guruswami2008beating}. For example, under Khot's Unique Games Conjecture~\citep{khot2002power}---a central open problem in complexity and hardness of approximation---all ranking CSPs and tree reconstruction CSPs are approximation resistant~\citep{GHMRC11,chatziafratis2023triplet}. Characterizing which problems are approximation resistant is currently an active research area (see, e.g., recent workshops at Daghstuhl~\cite{bulatov_et_al:DagRep.5.7.22,grohe_et_al:DagRep.8.6.1,grohe_et_al:DagRep.12.5.112,borirsky_et_al:DagSemProc.25211.1}) and our work points to  a new geometric CSP with this property.

\section{Accuracy Collapse in Triplet Embeddings}
Before proceeding with the technical details of the proofs for Theorem~\ref{th:main1}, we begin by a high-level overview of the necessary intermediate steps.
\subsection{Proof Strategy}
As above, let $\mathcal T$ be a collection of triplet comparisons on the $n$ items of $V$, and let $m$ denote the number of triplets in $\mathcal T$. 
In order to show~\cref{th:main1}, on the one hand, the instance must be realizable in dimension $D$, and on the other hand, it should be very far from realizable for any dimension $d$ that is a constant approximation of $D$, for a sufficiently small constant. Towards this we study random instances below.

\paragraph{Average-Case Instances.} We use the probabilistic method~\citep{alon2016probabilistic} to construct a suitable \textit{random triplet instance} with the appropriate density, and then show that it has \textit{both} properties with positive probability; this yields the conclusion that there exist realizable instances where accuracy collapse takes place, even if the dimension $d$ is a constant approximation to the ground-truth dimension $D$. A similar construction can be extended for general ordinal embeddings where comparisons on 4 items $(i,j,k,l)$ are also allowed. More precisely, for every $D\in\mathbb{N}$ larger than a constant ($240$ suffices), we sample uniformly at random $m=\Theta(Dn)$ triplets, by first choosing at random distinct $x,y,z \in V$, and then choosing one of the possible triplets on them at random. We say that an instance $\mathcal{T}$ is sampled according to $\mathcal{I}(n,m)$ in that case. In the following theorem, we show that a random instance with $\Theta(Dn)$ constraints is satisfiable in $\Theta(D)$ dimensions but very close to the trivial baseline in $o(D)$ dimensions.
\begin{theorem}
\label{thrm: dimension of random instance}
  There exist constants $c_1,c_2$ such that, for every $D\geq 240$, $D=o(\sqrt{n})$ and every\footnote{In fact, even for $D\leq c\sqrt{n}$ where $c$ is an absolute constant, our results hold with constant probability.} $\varepsilon>0$, with probability at least $\frac{1}{2}-o(1)$ the following happen simultaneously. For a random instance with $m=c_1Dn$ constraints there exists an embedding in $\mathbb{R}^{D}$ that satisfies all the constraints. On the other hand, any embedding in $\mathbb{R}^{d}$, with $d=c_2\varepsilon^2D$, satisfies a fraction of at most $\frac{1}{2}+\varepsilon$ of the constraints. 
\end{theorem}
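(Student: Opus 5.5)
The plan has two independent halves, each holding with probability bounded away from zero. They are combined by a union bound, chosen so that both events occur simultaneously with probability at least $\tfrac12 - o(1)$.

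\textbf{Upper bound on dimension (realizability in $\mathbb{R}^D$).} I would use sparsity of the random instance: with $m=c_1Dn$ triplets, the average number of constraints touching a point is $3c_1 D$. A triplet $(i,j,k)$ with anchor $i$ only constrains the distances from $f(i)$ to $f(j)$ and $f(k)$.

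The idea is to embed points greedily in a degeneracy order. Consider the hypergraph on $V$ whose hyperedges are the triplets. A random $3$-uniform hypergraph with $\Theta(Dn)$ edges has, with good probability, an ordering $v_1,\dots,v_n$ in which each vertex lies in $O(D)$ triplets together with earlier vertices. Equivalently, it is $O(D)$-degenerate, since any subset $S$ spans few edges relative to $D|S|$; this is where $D=o(\sqrt n)$ enters, to control small dense subsets via a first-moment count.

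Place $v_t$ when it is reached. Each earlier triplet involving $v_t$ imposes one strict inequality $\|x-f(j)\|^2<\|x-f(k)\|^2$ when $v_t$ is the anchor, which is linear in $x$ (a halfspace). When $v_t$ is $j$ or $k$, the constraint is a ball or the complement of a ball around the anchor $f(i)$. Choosing $x$ generically with a large norm, or along a fresh direction, should satisfy $O(D)$ such constraints when $D$ is a sufficiently large multiple of the local degree. For example, the halfspaces $\langle x, f(k)-f(j)\rangle > \cdot$ are feasible if their normals are in general position and their number is at most the dimension, and the ball conditions can be met by scaling.

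I expect some care is needed here: I would rather reserve coordinates so that each new point gets enough free directions. This step, together with the degeneracy bound, gives the constant $c_1$ and uses $D\ge 240$ to absorb constants.

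\textbf{Lower bound (collapse in $d=c_2\varepsilon^2D$).} The approach is a union bound over a finite family of "sign patterns" realizable in $\mathbb{R}^d$. Each triplet outcome is the sign of the degree-2 polynomial $\|f(i)-f(j)\|^2-\|f(i)-f(k)\|^2$ in the $nd$ variables of $f$. By Warren's theorem (or Milnor--Thom), the number of distinct sign vectors over all $O(n^3)$ possible triplets is at most $(O(n^3)/ (nd))^{nd}=\exp(O(nd\log n))$. To remove the $\log n$, I would instead count orderings of distances from each anchor, or note that only the $m$ sampled triplets matter.

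For a fixed pattern, i.e. a fixed labeling of all possible triples, each sampled random triplet, being a random orientation of a random triple, is satisfied with probability exactly $\tfrac12$ after averaging over the orientation choice. Chernoff then gives that the probability of more than $(\tfrac12+\varepsilon)m$ satisfied triplets is at most $e^{-2\varepsilon^2 m}=e^{-2\varepsilon^2 c_1 Dn}$. The union bound requires $nd\log(\cdot) \ll \varepsilon^2 Dn$.

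\textbf{Main obstacle.} The hardest step is getting the pattern count to $\exp(O(nd))$ without a $\log n$ factor. I would first try conditioning on the sampled triples and applying Warren's bound only to the $m$ chosen polynomials. This gives $(O(m)/(nd))^{nd}=(O(c_1D/d))^{nd}=\exp(O(nd\log(1/\varepsilon)))$, since $d=c_2\varepsilon^2D$. The orientation randomness remains independent of the pattern count. Taking $c_2$ small then makes $nd\log(1/(c_2\varepsilon^2))\le \varepsilon^2 c_1 Dn$, up to the mild $\log(1/\varepsilon)$, which I would absorb by adjusting constants or accept as $c_2$ depending on $\varepsilon$.
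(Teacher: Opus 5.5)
\textbf{The realizability half has a genuine gap.} Your greedy placement in degeneracy order never checks global consistency, and degeneracy alone cannot provide it. Take the two triplets $(a,b,c)$ and $(a,c,b)$. The hypergraph is $2$-degenerate. Yet when the last of $a,b,c$ is placed (say $c$, at $x$), you need both $\|x-f(a)\|>\|f(a)-f(b)\|$ and $\|x-f(a)\|<\|f(a)-f(b)\|$, so the instance is unsatisfiable in every dimension. In the random instance the expected number of such reversed pairs is $\Theta(m^2/n^3)=\Theta(D^2/n)$. This is where $D=o(\sqrt n)$ is actually needed, not in controlling dense subsets: the $O(D)$ degeneracy/arboricity bound holds with no such restriction. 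Longer contradictions must also be excluded, namely directed cycles in the digraph on $\binom{V}{2}$ that has an arc $\{x,y\}\to\{x,z\}$ for each triplet $(x,y,z)$.

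Even for consistent instances, the local step fails as you describe it. If the new point is the positive element of triplets with different anchors, it must lie in an intersection of balls whose centers and radii were fixed by earlier, uncoordinated choices, and these balls can be disjoint. Pushing $x$ to large norm helps the outside-ball constraints but breaks the inside-ball ones, and reserving fresh coordinates cannot remove a contradiction that is already present in the distances.

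The missing idea is to fix in advance a single linear order on all $\binom n2$ distances that is consistent with every triplet; such an order exists iff that digraph is acyclic (Bilu--Linial). You then realize this order in low dimension. The paper does exactly this. A first-moment count of directed cycles shows acyclicity w.h.p.\ when $\lambda=o(n^{-3/2})$, with the $2$-cycles dominating. Avdiukhin et al.'s theorem then realizes the order in $4\rho(G_c)$ dimensions, and $\rho(G_c)=O(D)$ by a first-moment bound over subsets.

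\textbf{The collapse half is a different, valid route, but it proves a weaker statement.} Your conditioning argument is sound. Given the anchors and unordered pairs, the orientation bits are i.i.d.\ uniform and independent of the set $P$ of realizable sign patterns. So Hoeffding plus a union bound over $|P|\le (Cm/(nd))^{nd}$ works, and it even gives failure probability $e^{-\Omega(\varepsilon^2 nD)}$ instead of $\tfrac12$.

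However, the union bound requires $c_2\log\bigl(C/(c_2\varepsilon^2)\bigr)<2c_1$, which forces $c_2=\Theta(1/\log(1/\varepsilon))$. That cannot be absorbed into constants, while the statement asks for absolute $c_1,c_2$.

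The fix is the paper's route. Use your Warren count only to bound the VC dimension of $\{h_f\}$ by $O(nd)$: a shattered set of $m'$ triples forces $2^{m'}\le (Cm'/(nd))^{nd}$, which is Alon et al.'s Theorem 3.2. Then invoke the optimal agnostic uniform-convergence bound $m=O\bigl((\mathrm{VC}+\log(1/\delta))/\varepsilon^2\bigr)$. That bound is obtained by chaining rather than a one-shot union bound over the growth function, so it has no $\log(1/\varepsilon)$ loss. Since every $f$ has true risk exactly $\tfrac12$ under random labels, this yields the claim with absolute $c_2$.
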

The theorem follows directly by Lemma \ref{lem: realizability} and Lemma \ref{lem: lower bound}, which prove the realizability of a random instance in $D$ dimensions and the accuracy collapse in $o(D)$ dimensions respectively, as described next.

\paragraph{Realizability.} Notice that a priori, there is no guarantee that the random instance constructed as above is satisfiable, let alone embeddable in some low dimension with no errors. However, an interesting step in our analysis shows that if we were to use potentially \textit{high} dimensions---much higher than $D$ and close to $n$---then we could satisfy all given triplets. To see this, based on the given triplets, we reduce the question of realizability in $n$ dimensions, to a question about the existence of \textit{directed cycles} in random directed graphs sampled from an appropriate distribution: it suffices that the comparisons between distances do not induce any formal contradiction that corresponds to a \textit{cycle} in an underlying directed graph whose vertex set consists of pairs $V\times V$. As we show in this regime, these random directed graphs contain no cycles, which in turn means that the random instance of geometric triplets (or quartets) is satisfiable in $n$ dimensions. Finally, to  reduce the dimension needed from $n$ down to $D$, we show that the
\textit{arboricity}\footnote{Arboricity is a notion of graph density defined as $\rho(G)=\max_{H\subseteq V}\left\lceil \frac{|E(H)|}{|H|-1}\right\rceil $.} of a suitable constraint graph is close to $D$, which then can be used to algorithmically find an embedding in $ D$-dimensions satisfying all $m$ triplets~\citep{avdiukhin2024embedding}. 

\paragraph{Accuracy Collapse.}

For the second part in~\cref{th:main1} we rely on a recent bound on the $\VC$-dimension of contrastive learning proven in \cite{alon2024optimal}.
 In the framework of \cite{alon2024optimal} a learning algorithm is given $m$ samples from a distribution $\mathcal{D}$ over $V^3\times \set{0,1}$: each sample is interpreted as a tuple $(x,y,z)$ along with a label, which indicates whether $\dist(x,y)<\dist(x,z)$ or $\dist(x,y)>\dist(x,z)$.

Consider the hypothesis class $\mathcal{H}$ of embeddings from $V$ to $\mathbb{R}^{d}$, where $d=\Theta(\varepsilon^2D)$. \citet{alon2024optimal} prove that the $\VC$-dimension of this hypothesis class is $\Theta(dn)$. Furthermore, observe that a random instance sampled from $\mathcal{I}(n,m)$ corresponds to a distribution $\mathcal{D}$ where the tuple is uniform among elements in $V^{3}$ and the labels are uniformly random. 
By the fundamental theorem of learning theory~\cite{shalev2014understanding}, we have that with $m=\Theta(Dn)=\Theta(\frac{dn}{\varepsilon^2})$ samples, for every function in $\mathcal{H}$, the empirical risk is close to the true risk. That is, for every embedding $f:V\to \mathbb{R}^{d}$:
\begin{align}
\label{eq: true risk close to emprical risk}
    \left|\mathcal{R}(f)-\hat{\mathcal{R}}(f)\right|\leq \varepsilon,
\end{align}
 where we use $\hat{\mathcal{R}}$ for the empirical risk and $\mathcal{R}$ for the true risk. On the one hand, the empirical risk  corresponds to the fraction of constraints satisfied by the embedding, i.e., the triplet accuracy $\mathrm{acc}(f)$. On the other hand, the true risk is the probability that an embedding is consistent with a random label, which is $1/2$, for every embedding. These two observations, together with the  uniform convergence bound of Equation \eqref{eq: true risk close to emprical risk} give us that no embedding can satisfy more than $\frac{1}{2}+\varepsilon$ of the constraints.

\subsection{Realizability in $D$ dimensions}
We first prove that a random instance of geometric triplets with $\Theta(Dn)$ constraints and $D=o(\sqrt{n})$ is satisfiable with high probability.
We will consider a slightly different model for the random instance that is more convenient for our proofs and then reduce to $\mathcal{I}(n,m)$.
In an instance sampled according to $\mathcal{I}(n,\lambda)$, for every $x,y,z\in V$ the number of occurrences of constraint $(x,y^{+},z^{-})$ follows a Poisson distribution with parameter $\lambda$, independent of everything else. Note that $m=\Theta(Dn)$ roughly corresponds to $\lambda=\Theta(\frac{D}{n^2})$ and the bound $D=o(\sqrt{n})$ corresponds to $\lambda=o\left(\frac{1}{n^{3/2}}\right)$.

We also describe a random model for directed (multi)-graphs again parameterized by $\lambda$, $\mathcal{G}_{\textrm{MAS}}(n,\lambda)$. We will show that a graph from this model is acyclic with high probability, which will imply that the triplets instance is realizable. This is a graph on $\binom{n}{2}$ vertices, i.e., the vertices of the graph are $\binom{[n]}{2}$ (each vertex will correspond to a distance between elements in $V$).
For a pair of vertices $\set{i,j}$ and $\set{k,l}$ and for the directed edge $e=(\set{i,j}, \set{k,l})$, let $X_e$ be the random variable denoting the number of occurrences of edge $e$ in the multi-graph. Then $X_e=0$ if $\left|\set{i,j}\cap \set{k,l}\right|=0$ and $X_e\sim \Poi(\lambda)$, independent of everything else, otherwise. We prove that if $\lambda$ is sufficiently small, a random graph generated from the aforementioned process is acyclic with high probability. 
\begin{lemma}
\label{lem: MAS_triplets}
    A graph $G\sim \mathcal{G}_{\textrm{MAS}}(n,\lambda)$ with $\lambda=o(\frac{1}{n^{3/2}})$ has no directed cycle with high probability.
\end{lemma}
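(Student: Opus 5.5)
We bound the expected number of directed cycles in $G\sim\mathcal{G}_{\textrm{MAS}}(n,\lambda)$ and apply Markov's inequality (the first-moment method). A directed cycle of length $\ell\ge 2$ is a cyclic sequence of distinct vertices $v_1,\dots,v_\ell\in\binom{[n]}{2}$ with $|v_t\cap v_{t+1}|\ge 1$ for all $t$ (indices mod $\ell$), together with a choice of one parallel copy of each edge. In a multigraph, a 2-cycle consisting of two copies of the same directed edge is not a directed cycle, so only sequences of distinct vertices with $\ell\ge2$ matter; a pair $u\to v$, $v\to u$ with $u\ne v$ is a genuine 2-cycle. Since the $X_e$ are independent and $\mathbb{E}[X_e]=\lambda$ on the allowed edges, the expected number of cycles supported on a fixed vertex sequence is
\[
\prod_{t}\mathbb{E}[X_{(v_t,v_{t+1})}]=\lambda^\ell .
\]
Hence
\[
\mathbb{E}[\#\text{cycles}]\le \sum_{\ell\ge 2} N_\ell\,\lambda^\ell,
\]
where $N_\ell$ is the number of closed walks of length $\ell$ in the underlying undirected "share-an-element" graph $H$ on $\binom{[n]}{2}$ (the line graph of $K_n$).

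We now count $N_\ell$. The graph $H$ is regular of degree $2(n-2)$. The crude bound $N_\ell\le \binom n2 (2n)^{\ell-1}$ gives a total of order $n^2(2n\lambda)^\ell/n$, which is small only if $\lambda=o(n^{-1})$ and $n\cdot(n\lambda)^2\to0$. The latter needs $\lambda=o(n^{-3/2})$, which is exactly our hypothesis. So the plan is:

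1. Start the walk at one of $\binom n2\le n^2$ vertices.
2. Allow at most $2n$ choices for each of the next $\ell-1$ steps.
3. Note that the final step must return to $v_1$, which forces $v_\ell$ to meet $v_1$. This constrains the last free choice to $O(n)$ of the $O(n^2)$ vertices, so effectively we gain a factor.

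This gives $N_\ell\le n^2(2n)^{\ell-1}$ choices for $v_1,\dots,v_\ell$ subject to the closing condition; that is, $N_\ell \le n^2\,(2n)^{\ell-1}$ already incorporates the closing adjacency (the step $v_\ell\to v_1$ costs no further factor). Therefore
\[
\mathbb{E}[\#\text{cycles}]\le \sum_{\ell\ge2} n^2(2n)^{\ell-1}\lambda^\ell=\frac{n}{2}\sum_{\ell\ge 2}(2n\lambda)^\ell=\frac n2\cdot\frac{(2n\lambda)^2}{1-2n\lambda}=O(n^3\lambda^2).
\]
With $\lambda=o(n^{-3/2})$ we get $n^3\lambda^2=o(1)$ and $2n\lambda=o(1)$, so the geometric series converges and the expectation is $o(1)$. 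Markov's inequality then shows $G$ is acyclic with high probability.

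The main obstacle is the counting step: making sure the closed-walk count really carries only one factor of $n^2$ (the starting vertex), with every subsequent step costing $O(n)$, including the step closing the cycle. One might worry that the short cycles ($\ell=2,3$) dominate. Indeed $\ell=2$ gives about $n^2\cdot 2n\cdot\lambda^2\approx n^3\lambda^2$, which is exactly where the threshold $n^{-3/2}$ comes from. So I would verify the $\ell=2$ and $\ell=3$ terms explicitly, then handle $\ell\ge 3$ via the geometric tail, where each extra step contributes a factor $2n\lambda=o(n^{-1/2})$.
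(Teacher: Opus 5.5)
Your proposal is correct and follows essentially the same first-moment argument as the paper: bound the number of admissible cycles of length $k$ by $\binom{n}{2}(2n)^{k-1}$, multiply by a per-cycle factor ($\lambda^k$ in your version, $(1-e^{-\lambda})^k\le\lambda^k$ in the paper's), and sum the geometric series to get $O(n^3\lambda^2)=o(1)$. One clarification: the count $\binom{n}{2}(2n)^{k-1}$ comes from simply discarding the closing constraint $v_\ell\cap v_1\neq\emptyset$, which can only reduce the number of walks; your step 3 is therefore unnecessary and does not provide any saving.
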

\begin{proof}
  Let $Y$ be the random variable counting the number of directed cycles in $G$. Then we have that:
\begin{align*}
         \Pr{G \text{ contains a directed cycle}}
        &= \Pr{Y>0}\\
        &\leq \Exp{Y},
    \end{align*}
    where we have used the first moment method (Markov's inequality). We now let, for $k=2,\ldots, n$, $Y_k$  be the random variable counting the number of directed cycles in $G$ with $k$ vertices, i.e., $Y=\sum_{k=2}^{n}Y_k$. We fix $k$ and bound $\Exp{Y_k}$. If $C_k$ is the set of directed cycles of size $k$, then by linearity of expectation:
    \begin{align*}
        \Exp{Y_k}=\sum_{\substack{ C\in C_k}}\Pr{C\in G}.
    \end{align*}
    Due to the generating process by which we are sampling the graph, $\Pr{C\in G}$ is not uniform over $C\in C_k$: for the majority of the cycles $C$ of size $k$ the probability that they appear in the graph is $0$.
    In particular, consider a cycle $C=\set{e_1,e_2,\ldots,e_k}$ and let for notational convenience $e_t=(\set{i_t, j_t}, \set{k_t, l_t})$.
    Note that if there exists a $t$ such that $|\set{i_t,j_t}\cap \set{k_t,l_t}|\neq 1$ then $\Pr{e_t\in G}=0$ and thus $\Pr{C\in G}=0$.
    Let $C_k^r$ be the set of cycles for which no such edge exists: these are the cycles that are realizable by our random generating process. On the one hand, the size of $C_k^r$ can be bounded by $\binom{n}{2}(2n)^{k-1}\leq (2n)^{k+1}$. On the other hand, for $C\in C_k^r$, we have that $\Pr{C\in G}=({1-e^{-\lambda}})^k$. We have that:
    \begin{align*}
        \Exp{Y_k}&=\sum_{C\in C_k}\Pr{C\in G}\\
        &\leq(2n)^{k+1}({1-e^{-\lambda}})^k\\
        &=2n(2(1-e^{-\lambda})n)^k.
    \end{align*}
    Using linearity of expectation and that $1-e^{-\lambda}=\lambda+O(\lambda^2)$, we get that:
    \begin{align*}
        \Exp{Y}&=\sum_{k=2}^{n}\Exp{Y_k}\\
        &\leq2n \sum_{k=2}^{n}(2(\lambda+O(\lambda^2))n)^k\\
        &=2n \frac{(2(\lambda +O(\lambda^2))n)^2}{1-2(\lambda+O(\lambda^2))n}\\
        &=o(1),
    \end{align*}
    where we have used that $\lambda=o\left(\frac{1}{n^{3/2}}\right)$.
\end{proof}

We can now prove that a triplet instance sampled from $\mathcal{I}(n,\lambda)$ is realizable.
\begin{lemma}
\label{lem: sat in n dimensions}
    An instance $\mathcal{T}\sim \mathcal{I}(n,\lambda)$ with $\lambda=o\left(\frac{1}{n^{3/2}}\right)$ is satisfiable in $n$ dimensions with high probability. 
\end{lemma}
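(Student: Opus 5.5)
We reduce realizability to acyclicity of a directed graph on pairs and then invoke Lemma~\ref{lem: MAS_triplets}. Given $\mathcal{T}\sim\mathcal{I}(n,\lambda)$, build the multigraph $G_{\mathcal T}$ on vertex set $\binom{[n]}{2}$ by adding, for each occurrence of a constraint $(x,y^{+},z^{-})$, a directed edge $(\set{x,y},\set{x,z})$, read as ``$\dist(x,y)$ must be smaller than $\dist(x,z)$''. Each such edge joins two pairs sharing exactly one element, namely $x$. Conversely, an ordered pair of vertices $(\set{i,j},\set{k,l})$ with $|\set{i,j}\cap\set{k,l}|=1$ determines a unique triple: the anchor is the common element, the positive is the other element of the first pair, and the negative is the other element of the second pair. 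Therefore the edge multiplicities of $G_{\mathcal T}$ are independent $\Poi(\lambda)$ for such pairs and $0$ otherwise. Constraints with repeated items, e.g.\ $y=z$ or $x\in\set{y,z}$, contribute either self-loops or nothing, and are excluded by taking $x,y,z$ distinct. Hence $G_{\mathcal T}$ is distributed as (a subgraph of) $\mathcal{G}_{\textrm{MAS}}(n,\lambda)$, and Lemma~\ref{lem: MAS_triplets} shows it is acyclic with high probability.

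On the event that $G_{\mathcal T}$ is acyclic, it has a topological order. From it we assign each pair $\set{i,j}$ a target value $\delta_{ij}$, taking distinct values in $(1,2)$ that increase along the topological order. Then every constraint $(x,y,z)\in\mathcal T$ satisfies $\delta_{xy}<\delta_{xz}$.

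It remains to realize the $\delta$'s as Euclidean distances among $n$ points. Since all values lie in $(1,2)$, the triangle inequality holds strictly, so the $\delta$'s form a metric. We then need a Euclidean embedding in $\mathbb{R}^{n}$ (indeed $\mathbb{R}^{n-1}$) preserving them exactly, or at least their order. The main obstacle is that not every metric is Euclidean; we use the fact that metrics close to the equilateral one are. Write $\delta_{ij}^2=2+\eta_{ij}$ with $|\eta_{ij}|\le\eta$ small. By Schoenberg's criterion, $\delta$ is Euclidean iff the Gram matrix
\[
M_{ij}=\tfrac12(\delta_{in}^2+\delta_{jn}^2-\delta_{ij}^2),\qquad i,j<n,
\]
is positive semidefinite. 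For the equilateral metric with all squared distances $2$, $M=I+J$, whose eigenvalues are all at least $1$. The perturbation has entries $O(\eta)$, so its operator norm is $O(n\eta)$. Choosing $\eta<c/n$ keeps $M$ positive definite.

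We therefore pick the $\delta_{ij}^2$ as distinct values in $(2-c/n,\,2+c/n)$, increasing along the topological order. This gives $n$ points in $\mathbb{R}^{n-1}\subseteq\mathbb{R}^n$ whose pairwise distances are exactly $\delta_{ij}$, so every triplet is satisfied strictly. This holds whenever $G_{\mathcal T}$ is acyclic, i.e.\ with high probability.
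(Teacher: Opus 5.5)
Your proof is correct and follows the paper's route. Like the paper, you build the directed graph on $\binom{V}{2}$ with one edge $(\set{x,y},\set{x,z})$ per triplet occurrence, observe that it is distributed as $\mathcal{G}_{\textrm{MAS}}(n,\lambda)$, and conclude acyclicity from Lemma~\ref{lem: MAS_triplets}. The only difference is that the paper cites Lemma 3 of Bilu--Linial for the fact that any consistent linear order of the distances is realizable in $n$ dimensions, whereas you prove it directly: you perturb the regular simplex and verify positive definiteness of the Gram matrix via Schoenberg's criterion, which makes the argument self-contained and even lands in $\mathbb{R}^{n-1}$. Your explicit restriction to distinct $x,y,z$ is also genuinely needed, since a constraint of the form $(x,y,y)$ is unsatisfiable; the paper leaves this restriction implicit.
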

\begin{proof}
    We apply Lemma 3 from \cite{bilu2005monotone}. The instance is satisfiable in $n$ dimensions if and only if there exists a linear ordering of the distances consistent with the constraints. The question of satisfiability thus reduces to a question of the existence of cycles in a directed graph. We construct the directed graph $G$ as follows. The set of vertices of the graph corresponds to the set $\binom{V}{ 2}$ of size $\binom{n}{2}$. If $(x,y^+,z^-)\in \mathcal{T}$ then we add the directed edge $e=(\set{x,y}, \set{x,z})$. The linear ordering exists if and only if $G$ contains no directed cycles. Observe that $G\sim \mathcal{G}_{\textrm{MAS}}(n, \lambda)$ and by Lemma \ref{lem: MAS_triplets} the instance is satisfiable with high probability. 
\end{proof}

Next, we show how to embed the instance with no errors from $n$ down to $D$ dimensions. We prove the following upper bound on the arboricity $\rho(G)$ of random multi-graphs, which suffices for our purposes. We say that a multi-graph $G\sim\mathcal{G}(n,\lambda)$, if for every edge $e$ the number of occurrences of the edge follows a Poisson distribution with parameter $\lambda$, independent of everything else.
\begin{lemma}
\label{lem: arboricity bound}
    Let $G\sim\mathcal{G}(n,\lambda_n)$ with $\lambda_n=\frac{\alpha_n}{n}$ and $\alpha_n\geq 1$. Then, with probability $1-o(1)$:
    \begin{align*}
       \rho(G)\leq 5\alpha_n 
     \end{align*}
\end{lemma}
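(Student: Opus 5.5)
The plan is a first-moment bound over dense vertex subsets. Since $\rho(G)=\max_{H}\lceil |E(H)|/(|H|-1)\rceil$, the event $\rho(G)>5\alpha_n$ means some vertex set $S$ with $|S|=s\ge 2$ spans more than $5\alpha_n(s-1)$ edges, counted with multiplicity. So I will bound
\[
\Pr{\rho(G)>5\alpha_n}\le \sum_{s=2}^{n}\binom{n}{s}\Pr{\Poi\!\left(\lambda_n\tbinom{s}{2}\right)> 5\alpha_n(s-1)}.
\]
The Poisson step uses that the number of edges inside a fixed $S$ is a sum of $\binom{s}{2}$ independent $\Poi(\lambda_n)$ variables. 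If the model has directed or ordered pairs, the parameter is $\lambda_n s(s-1)$ instead, which only changes constants; I would write the proof for $\binom{s}{2}$ pairs and note that ordered pairs double the mean.

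For a fixed $S$, the mean is $\mu_s=\lambda_n\binom{s}{2}=\alpha_n s(s-1)/(2n)\le \alpha_n(s-1)/2$, while the threshold is $t_s=5\alpha_n(s-1)$. So the threshold is at least $10\mu_s$, and much larger when $s\ll n$. I will apply the Poisson Chernoff bound $\Pr{\Poi(\mu)\ge t}\le (e\mu/t)^t$, which gives
\[
\Pr{\Poi(\mu_s)\ge t_s}\le \left(\frac{e s}{10 n}\right)^{5\alpha_n(s-1)} .
\]
Combined with $\binom{n}{s}\le (en/s)^s$, the $s$-th term is at most $(en/s)^s\,(es/(10n))^{5\alpha_n(s-1)}$.

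I will split the sum into two ranges.

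\textbf{Small $s$.} For $s\le n/10$, set $x=s/n$. Using $\alpha_n\ge1$, the exponent $5\alpha_n(s-1)\ge 5(s-1)$, and $5(s-1)\ge \tfrac{5}{2}s$ for $s\ge2$. The term is then at most $\exp\bigl(s[\,1+\ln(1/x)\,]-\tfrac{5}{2}s[\,\ln(10/e)+\ln(1/x)\,]\bigr)$. Since $\ln(1/x)\ge \ln 10$ here, this is $\le \exp(-\tfrac32 s\ln(1/x)-c s)$, i.e.\ at most $(s/n)^{3s/2}$. Summed over $s$, this is dominated by $s=2$, which contributes $O(n^{-3})\cdot n^{0}$ after combining with the prefactor, so this range totals $o(1)$.

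\textbf{Large $s$.} For $s> n/10$ the ratio $es/(10n)$ can approach $e/10$, which is still below $1$. The binomial factor is now at most $2^n$, while the Chernoff factor is at most $(e/10)^{5(s-1)}\le (e/10)^{n/2-5}$. Since $2\,(e/10)^{1/2}<1$ (because $(e/10)^{1/2}\approx0.52<0.5\cdot 2$), the product decays exponentially in $n$. Summing over at most $n$ values of $s$ still gives $o(1)$.

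I expect the main obstacle to be the constant bookkeeping in this large-$s$ range. Here $t_s/\mu_s$ is only about $10$ instead of growing, so the entropy term $\binom{n}{s}$ must be beaten by a fixed-ratio Chernoff bound. This is where the constant $5$ and the assumption $\alpha_n\ge1$ are actually needed. If the margin were tight under the ordered-pair convention, I would use the sharper bound $\Pr{\Poi(\mu)\ge t}\le e^{-\mu}(e\mu/t)^t$ and check the worst case $s=n$ numerically.
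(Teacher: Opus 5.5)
Your route is the same as the paper's: a union bound over vertex sets $S$ of each size $s$, the Poisson tail $(e\mu/t)^t$, the estimate $\binom{n}{s}\le (en/s)^s$, and a split of the sum over $s$. However, the large-$s$ step, which you flag as the crux, fails as written. You need $2(e/10)^{1/2}<1$, i.e.\ $(e/10)^{1/2}<1/2$. In fact $(e/10)^{1/2}\approx 0.521$, so $2(e/10)^{1/2}\approx 1.043$, and your bound $2^n(e/10)^{n/2-5}$ grows exponentially. Your parenthetical only verifies $(e/10)^{1/2}<1$.

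Replacing $2^n$ by $(en/s)^s\le (10e)^s$ does not rescue it, since $10e\,(e/10)^{5/2}\approx 1.047>1$. The loss comes from bounding the binomial factor and the Chernoff base at opposite ends of the range of $s$; you have to keep the $s/n$ coupling. The repair is already in your small-$s$ computation, which never actually uses $s\le n/10$. For every $2\le s\le n$,
\[
\Bigl(\frac{en}{s}\Bigr)^{s}\Bigl(\frac{es}{10n}\Bigr)^{5s/2}=\Bigl(\frac{e^{7/2}}{10^{5/2}}\Bigr)^{s}\Bigl(\frac{s}{n}\Bigr)^{3s/2}\le (0.11)^{s}\Bigl(\frac{s}{n}\Bigr)^{3s/2}.
\]
This sums to $o(1)$ by splitting at $s=\sqrt n$, as the paper does. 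For $s\le\sqrt n$ the terms are at most $n^{-3s/4}$, and for $s>\sqrt n$ they are at most $(0.11)^{\sqrt n}$. So drop the separate large-$s$ case.

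There is also a smaller slip in your first reduction. Because of the ceiling in $\rho$, the event $\rho(G)>5\alpha_n$ only yields a set with $|E(S)|/(s-1)>\lfloor 5\alpha_n\rfloor$. When $5\alpha_n\notin\mathbb{Z}$, this can be below $5\alpha_n$. Since $\lfloor 5\alpha_n\rfloor\ge 5\alpha_n-1\ge 4\alpha_n$ for $\alpha_n\ge 1$, run the argument with threshold $4\alpha_n(s-1)$. This is exactly what the paper does: it bounds the maximum density by $4\alpha_n$ and then uses $\lceil 4\alpha_n\rceil\le 4\alpha_n+1\le 5\alpha_n$. With this threshold the ratio to the mean is at least $8$, and the exponent satisfies $4\alpha_n(s-1)\ge 2s$. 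The $s$-th term becomes at most $\bigl((e^3/64)(s/n)\bigr)^s$, which again sums to $o(1)$. With these two corrections your argument is complete.
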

We give the proof of the lemma in Appendix \ref{section: appendix_omitted proofs}. 

We are now ready to prove the first claim of Theorem \ref{thrm: dimension of random instance}.
\begin{lemma}
\label{lem: realizability}
    A random instance $\mathcal{T}\sim \mathcal{I}(n, m)$, with $m=c_1Dn$ for $c_1$ being an absolute constant and $240\leq D=o(\sqrt{n})$, is satisfiable in $D$ dimensions with high probability. 
\end{lemma}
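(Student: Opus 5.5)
The plan is to combine three ingredients: realizability in $n$ dimensions (Lemma~\ref{lem: sat in n dimensions}), an arboricity bound on a suitable constraint graph (Lemma~\ref{lem: arboricity bound}), and the algorithmic result of \citet{avdiukhin2024embedding}. That result states that a satisfiable triplet instance whose constraint graph has arboricity $\rho$ can be embedded with no errors in $O(\rho)$ dimensions, and we will use it in that form. Here the constraint graph $H$ on vertex set $V$ has, for each triplet $(x,y^+,z^-)$, the edges $\{x,y\}$ and $\{x,z\}$ (or whatever graph that paper uses; we apply it as stated). Since Lemmas~\ref{lem: MAS_triplets}--\ref{lem: arboricity bound} are phrased in Poisson models, we first prove everything for $\mathcal{T}\sim\mathcal{I}(n,\lambda)$ and then transfer to $\mathcal{I}(n,m)$.

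\textbf{Poissonization.} Choose $\lambda=\Theta(D/n^2)$ so that the expected number of constraints is $n(n-1)(n-2)\lambda\approx 2c_1Dn$, which is twice the target $m$. The total count $N$ is Poisson with mean $\Theta(Dn)\to\infty$, so $N\ge m$ with probability $1-o(1)$. Conditioned on $N=N'$, the $N'$ constraints are i.i.d.\ uniform triplets, which is exactly the $\mathcal{I}(n,N')$ model. Realizability is monotone: removing constraints preserves satisfiability in $\mathbb{R}^D$. Hence, coupling $\mathcal{I}(n,m)$ as a sub-instance of $\mathcal{I}(n,N')$ for $N'\ge m$, it suffices to show that the Poisson instance is satisfiable in $D$ dimensions with high probability. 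Because $D=o(\sqrt n)$, we have $\lambda=o(n^{-3/2})$, so Lemma~\ref{lem: sat in n dimensions} gives satisfiability in $n$ dimensions, equivalently acyclicity of the distance-comparison digraph, with high probability.

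\textbf{Arboricity.} Each triplet contributes at most two edges of $H$. For a fixed pair $\{u,v\}$, the number of triplets producing that edge is a sum of $O(n)$ independent $\Poi(\lambda)$ variables, hence $\Poi(\Theta(n\lambda))=\Poi(\Theta(D/n))$. The two edge-endpoints of one triplet are correlated, so $H$ is not literally $\mathcal{G}(n,\lambda')$. We handle this by writing $H\subseteq H_1\cup H_2$, where $H_1$ is built from the anchor-positive pairs and $H_2$ from the anchor-negative pairs. Each $H_i$ is distributed as $\mathcal{G}(n,\lambda')$ with $\lambda'=\alpha/n$ and $\alpha=\Theta(c_1D)\ge1$, since its edge multiplicities are sums over disjoint sets of independent Poissons. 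Arboricity is subadditive, so Lemma~\ref{lem: arboricity bound} gives $\rho(H)\le\rho(H_1)+\rho(H_2)\le 10\alpha$ with probability $1-o(1)$. We then choose $c_1$ small enough that the resulting dimension bound $O(\rho)$ from \citet{avdiukhin2024embedding}, with its explicit constant, is at most $D$. The assumption $D\ge240$ is what absorbs these constants, together with the requirement $\alpha\ge1$.

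\textbf{Main obstacle.} The delicate step is matching the exact hypotheses and constants of the \citet{avdiukhin2024embedding} embedding theorem: which graph's arboricity it uses, whether it needs a consistent distance ordering as input, and its constant factor. Acyclicity from Lemma~\ref{lem: MAS_triplets} supplies the required consistent ordering. The correlation issue is handled by the $H_1\cup H_2$ split, and the union bound over the failure events (the Poisson count, acyclicity, and the two arboricity bounds) leaves overall success probability $1-o(1)$.
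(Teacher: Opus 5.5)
Your proposal is correct and follows the paper's proof step for step: Poissonize with expected count $2m$, get $n$-dimensional realizability from Lemma~\ref{lem: sat in n dimensions}, split the constraint graph into its anchor-positive and anchor-negative parts (each exactly $\mathcal{G}(n,\lambda')$), bound their arboricities by Lemma~\ref{lem: arboricity bound} plus subadditivity, apply Theorem 9 of \citet{avdiukhin2024embedding}, and de-Poissonize via a Chernoff bound, Poisson conditioning, random subsampling and monotonicity. The only difference is that the paper fixes the constants explicitly (with $c_1=1/240$ and $c_1'=2c_1$, that theorem's $4\rho(G_c)$ bound gives at most $4\cdot 30c_1'D=D$ dimensions), whereas you leave them implicit.
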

\begin{proof}
    We first prove the claim for an instance sampled from $\mathcal{I}(n,\lambda)$ and then reduce to an instance from $\mathcal{I}(n,m)$. Let $\mathcal{T}\sim \mathcal{I}(n,\lambda)$, with $D=o(\sqrt{n})$ and $\lambda=\frac{c_1'D}{n^2}$ (we will take $c_1'=2c_1=\frac{1}{120}$). Note that the conditions of Lemma \ref{lem: sat in n dimensions} are satisfied and thus the instance is satisfiable in $n$ dimensions. We now consider the constraint (multi)-graph of the instance $G_c$, as defined in \cite{avdiukhin2024embedding}: we have $V(G_c)=V$ and for every $(x,y^+,z^{-})$ we add the edges $\set{x,y}$ and $\set{x,z}$ to $G_c$. By Theorem 9 in \cite{avdiukhin2024embedding}, if the instance is satisfiable in $n$ dimensions then it is also satisfiable in $4\rho(G_c)$ dimensions.
    We now observe that we can write $G_c=G_1\cup G_2$, where $G_1$ is the graph with edges coming from anchor-positive pairs and $G_2$ is the graph with edges from anchor-negative pairs.
    Observe that for $i=1,2$, $G_i\sim \mathcal{G}(n,\lambda')$ with $\lambda'=(1+o(1))\frac{2c_1'D}{n}$. By Lemma \ref{lem: arboricity bound} and the union bound we have that for $i=1,2$:
    \begin{align*}
        \rho(G_i)\leq 5(1+o(1))2c_1'D\leq 15c_1'D.
    \end{align*}
    We now have, by subadditivity of arboricity, that $\rho(G_c)\leq \rho(G_1)+\rho(G_2)\leq 30c_1'D$. By Theorem 9 in \cite{avdiukhin2024embedding}, the instance is satisfiable in $120c_1'D=D$ dimensions.
    
    For the reduction to the $\mathcal{I}(n,m)$ model let $m=c_1Dn$ with $c_1=\frac{c_1'}{2}$. By the Chernoff bound (Exercise 2.3.5 in \cite{vershynin}) with probability $1-o(1)$, for an instance $\mathcal{T}\sim \mathcal{I}(n,\lambda)$ it holds that $|\mathcal{T}|\geq m$. On the other hand, by Poisson conditioning (e.g. Theorem 3.7.8 in \cite{durrett2019probability}) conditioned on $|\mathcal{T}|=m'$, $\mathcal{T}$ has the same distribution as an instance sampled from $\mathcal{I}(n,m')$.
    Now consider an instance generated as follows. Sample $\mathcal{T}_1\sim \mathcal{I}(n,\lambda)$ and if $|\mathcal{T}_1|\geq m$, then let $\mathcal{T}_2$ be an instance consisting of $m$ random constraints of $\mathcal{T}_1$, otherwise let $\mathcal{T}_2=\mathcal{T}_1$. Note that since removing constraints cannot make a satisfiable instance unsatisfiable, we have that $\mathcal{T}_2$ is satisfiable in $D$ dimensions with high probability,  and also $o(1)$-close in total variation distance to an instance sampled from $\mathcal{I}(n,m)$. Putting the two observations together yields the result.
\end{proof}

\subsection{Accuracy collapse in $d\approx \varepsilon^2D$ dimensions}
We now prove that with constant probability for a random instance of $m=\Theta(Dn)$ constraints, any embedding to $d\approx \varepsilon^2D$ dimensions  satisfies at most $\frac{1}{2}+\varepsilon$ of constraints. We show the following lemma:
\begin{lemma}
\label{lem: lower bound}
    Let $\mathcal{T}\sim \mathcal{I}(n,m)$ with $m=c_1Dn$ and $c_1$ being an absolute constant. Then for any $\varepsilon>0$, with probability at least $\frac{1}{2}$, for any embedding $f: V\to \mathbb{R}^{d}$, with $d=c_2\varepsilon^2D$ and $c_2$ being an absolute constant, it holds that it satisfies a fraction of at most $\frac{1}{2}+\varepsilon$ of constraints in $\mathcal{T}$.
\end{lemma}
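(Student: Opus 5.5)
We prove the claim through uniform convergence over the hypothesis class $\mathcal H_d$ of all embeddings $f:V\to\mathbb{R}^d$. This follows the sketch given in the proof-strategy paragraph. We recast the instance as labeled data. Sampling a constraint of $\mathcal{I}(n,m)$ means choosing distinct $x,y,z$ uniformly and then one of the two orientations $(x,y^+,z^-)$ or $(x,z^+,y^-)$ uniformly. We encode this as an unlabeled tuple $(x,y,z)$ together with a label $b\in\{0,1\}$, where $b=1$ means $\dist(x,y)<\dist(x,z)$ is required. The $m$ constraints are then $m$ i.i.d.\ samples from a distribution $\mathcal D$ on $V^3\times\{0,1\}$, in which the label is an unbiased coin independent of the tuple. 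For each $f\in\mathcal H_d$ let $h_f(x,y,z)=\mathbf 1[\|f(x)-f(y)\|_2<\|f(x)-f(z)\|_2]$. Then $\mathrm{acc}(f;\mathcal T)$ is exactly the empirical agreement $1-\hat{\mathcal R}(f)$ of $h_f$ with the labels. Because the label is independent of $(x,y,z)$ and uniform, the true agreement is $\Pr{h_f=b}=\tfrac12$ for every $f$, that is, $\mathcal R(f)=\tfrac12$. This holds whatever $h_f$ does on ties.

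Next we bound the complexity of $\{h_f: f\in\mathcal H_d\}$. By the result of \citet{alon2024optimal}, its VC-dimension is $O(dn)$. If we prefer a self-contained argument, we can use Warren/Milnor-type sign-pattern counting instead. Each $h_f$ is the sign of the degree-$2$ polynomial $\|f(x)-f(z)\|^2-\|f(x)-f(y)\|^2$ in the $dn$ real coordinates of $f$. So on any $N$ tuples the number of realizable labelings is at most $(8eN/(dn))^{dn}$, which gives VC-dimension $O(dn\log)$ at worst and $O(dn)$ with the sharper bound. We then apply the standard uniform convergence theorem for classes of VC-dimension $\mathsf{d}$ \citep{shalev2014understanding}. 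With probability at least $1-\delta$,
\[
\sup_{f}\bigl|\mathcal R(f)-\hat{\mathcal R}(f)\bigr|\le C\sqrt{\frac{\mathsf d+\log(1/\delta)}{m}} .
\]
We take $\delta=\tfrac12$, $\mathsf d\le C' dn$ and $m=c_1Dn$. The right-hand side is then at most $C''\sqrt{d/(c_1D)}+O(1/\sqrt{m})$. Choosing $d=c_2\varepsilon^2D$ with $c_2\le c_1/(4C''^2)$ makes it at most $\varepsilon$, once $m$ is large enough to absorb the $\log(1/\delta)$ term. Hence with probability at least $\tfrac12$, every $f$ satisfies $\mathrm{acc}(f;\mathcal T)\le \tfrac12+\varepsilon$.

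The main obstacle is making the VC bound truly linear in $dn$, with no logarithmic factor. A factor like $\log(m/(dn))=\log(D/d)=\log(1/\varepsilon^2)$ would only change $c_2$ into something like $\varepsilon^2/\log(1/\varepsilon)$, not an absolute constant. We therefore rely on the tight $\Theta(dn)$ bound of \citet{alon2024optimal} rather than on raw Warren counting. We should also check that their bound is stated for exactly the triplet class $h_f$ on arbitrary triples, including repeated or degenerate ones, so that it applies to our $\mathcal D$. A minor point is that $\mathcal I(n,m)$ samples constraints i.i.d.\ with replacement, which the uniform convergence theorem requires; we should confirm this matches the model. If the paper's model were without replacement, we would instead couple it with i.i.d.\ sampling, since collisions are rare when $m=\Theta(Dn)\ll n^3$.
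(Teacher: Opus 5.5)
Your proposal is correct and follows essentially the same route as the paper. Both arguments encode $\mathcal I(n,m)$ as $m$ i.i.d.\ draws from a distribution on distinct triples with an independent fair-coin label, so that every $h_f$ has true agreement exactly $\tfrac12$. Both then invoke the $O(dn)$ VC bound of \citet{alon2024optimal} and apply agnostic uniform convergence (Theorem 6.8 of \citet{shalev2014understanding}) with $\delta=\tfrac12$ and $c_2$ small relative to $c_1$.

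Two small remarks. First, on ties $\mathrm{acc}(f;\mathcal T)$ is only bounded above by the empirical agreement of $h_f$, not equal to it; that is still the direction you need. Second, the logarithmic loss you worry about does not actually arise from Warren-type counting, since $2^N\le (8eN/(dn))^{dn}$ already forces $N=O(dn)$.
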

\begin{proof}
    For this proof we will use the learning-theoretic framework of \cite{alon2024optimal}. We view embeddings $f:V\to \mathbb{R}^{d}$ as a hypothesis class $\mathcal{H}$ of functions $h:V^{\times 3}\to \set{0,1}$. Intuitively, for an embedding $f$,  for every $(x,y,z)\in V^{\times 3}$ either $(x,y^{+}, z^-)$ or $(x,z^{+},y^{-})$ is satisfied. In particular, for every embedding $f$ there is a corresponding hypothesis class $h_f$ such that for $(x,y,z)\in V^{\times 3}$:
    \begin{align*}
        h_f(x,y,z)=\begin{cases}
            1 \text{, if } \|f(x)-f(y)\|< \|f(x)-f(z)\|,\\
            0\text{, if }\|f(x)-f(y)\|> \|f(x)-f(z)\|.
        \end{cases}
    \end{align*}
    By Theorem 3.2 in \cite{alon2024optimal}, there exists an absolute constant $c$ such that $\VC(\mathcal{H})\leq cnd$ where $\VC$ denotes the VC-dimension. We now consider distribution $\mathcal{D}$ over $V^{3}\times \set{0,1}$ that is the product of the distribution that is uniform over elements of $V^{3}$ where all three elements are distinct and the distribution that is uniform over $\set{0,1}$. Note that an instance of $m$ samples from $\mathcal{D}$ has the same distribution as an instance sampled from $\mathcal{I}(n,m)$. We can now use Theorem 6.8 in \cite{shalev2014understanding} to get that there exists a constant $C$ such that if $\mathcal{T}\sim \mathcal{I}(n,m)$ with $m\geq C\frac{\VC(\mathcal{H})}{\varepsilon^2}$ then with probability at least $\frac{1}{2}$ for every embedding $f$ it holds that $\left|\mathrm{acc}(f;\mathcal{T})- \mathbb{P}_{(x,y)\sim \mathcal{D}}\left(h_{f}(x)=y\right)\right|\leq \varepsilon$. Now note that the label according to distribution $\mathcal{D}$ is random, meaning that for every $f$, $\mathbb{P}_{(x,y)\sim \mathcal{D}}(h_f(x)=y)=\frac{1}{2}$. Finally, note that $m=c_1Dn =c_1\frac{d}{c_2\varepsilon^2}$, thus taking $c_2$ small enough so that $\frac{c_1}{c_2}\geq C$, we get that with probability at least $\frac{1}{2}$, for every $f$:
    \begin{align*}
        \left|\mathrm{acc}(f;\mathcal I)-\frac{1}{2}\right|\leq \varepsilon,
    \end{align*}
    which gives us the result.
\end{proof}

\paragraph{Extensions.}
Our results can be extended to ordinal embeddings and the problem of quadruplet comparisons $(i,j,k,l)$. For our conclusions to hold, the upper bound required for the dimension is $D=o(n)$. In Appendix~\ref{sec: quadruplets}, we describe the differences needed, and we also prove an analog of Lemma~\ref{lem: MAS_triplets}.

\section{Inapproximability of Triplet Embeddings}

Here we prove that approximating the accuracy in triplet embeddings better than the trivial baseline is hard:

\begin{theorem}[Approximate Triplet Embeddings]\label{th:inapproximability}
    Let $d\ge 1$. Assuming the Unique Games Conjecture, for every $\varepsilon>0$, it is NP-hard to distinguish between triplet instances $(V,\mathcal T)$ from the following two cases: 
    \begin{enumerate}
        \item  YES instance: There exists an embedding $f:V\to\mathbb{R}^d$ whose accuracy is $\mathrm{acc}(f;\mathcal T)\ge 1-\epsilon$.
        \item  NO instance: For every embedding $f:V\to\mathbb{R}^d$, $\mathrm{acc}(f;\mathcal T)\le \tfrac12 +\epsilon$.
    \end{enumerate}
\end{theorem}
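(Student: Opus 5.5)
The plan is a Unique-Games-based dictatorship-test reduction in the style of \citet{GHMRC11}, who prove approximation resistance of ordering CSPs. It is written for a general target dimension $d$, so that the YES case is witnessed by a one-dimensional embedding and the NO case bounds every embedding $f:V\to\mathbb{R}^d$.

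\textbf{Construction.} Start from a Unique Games instance $(U,W,E,[R],\{\pi_e\})$ with completeness $1-\eta$ and soundness $\eta$. For every $w\in W$ introduce a block of items indexed by $[q]^R$, the long code of $w$, with $q$ large. A triplet is generated as follows:
\begin{enumerate}
\item pick a random $u\in U$ and neighbours $w,w',w''$ of $u$;
\item sample $(x,y,z)\in([q]^R)^3$ coordinatewise i.i.d.\ from a test distribution $\mu$ on $[q]^3$, followed by a small amount of independent noise;
\item output the triplet $\bigl((w,x\circ\pi),(w',y\circ\pi'),(w'',z\circ\pi'')\bigr)$, where $\pi,\pi',\pi''$ are the edge permutations.
\end{enumerate}

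\textbf{Completeness.} Given a labeling $\ell$ of a $(1-\eta)$-satisfiable instance, use the dictator embedding $f(w,x)=x_{\ell(w)}+\delta\cdot(\text{small tiebreak})\in\mathbb{R}\subseteq\mathbb{R}^d$. This satisfies a triplet whenever $|a-b|<|a-c|$ holds on the chosen coordinate. So we need $\mu$ to put mass $\ge 1-\varepsilon$ on $\{|a-b|<|a-c|\}$.

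\textbf{Soundness.} The argument runs in four steps.
\begin{enumerate}
\item Suppose some $f:V\to\mathbb{R}^d$ achieves accuracy $\ge\frac12+\varepsilon$. Averaging over $u$, the averaged functions $g_u(x)=\mathbb{E}_w f(w,x\circ\pi)$ are vector valued, so coordinate influences are measured as the sum of the influences of the $d$ output coordinates.
\item If $g_u$ has a coordinate of noticeable low-degree influence for many $u$, decode a labeling in the standard way, contradicting UG soundness.
\item Otherwise apply the multidimensional invariance principle (Mossel) to the triple $(g(x),g(y),g(z))$. Its joint law is close to that of $(G(X),G(Y),G(Z))$, where $X,Y,Z$ are Gaussians whose correlation structure matches $\mu$.
\item The goal is to arrange $\mu$ so that $(X,Y)$ and $(X,Z)$ have identical joint law, with the coupling symmetric under $Y\leftrightarrow Z$. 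Then $\|G(X)-G(Y)\|$ and $\|G(X)-G(Z)\|$ are exchangeable, and the accuracy is at most $\frac12+o(1)$.
\end{enumerate}
The test predicate $\mathbf{1}[\|p-q\|<\|p-r\|]$ is not smooth. I would handle this by the usual Lipschitz approximation plus an anti-concentration or noise argument, which bounds the mass near ties.

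\textbf{Main obstacle: the choice of $\mu$.} Strict completeness and exact pairwise symmetry are incompatible. If $(a,b)$ and $(a,c)$ had the same law, then $\mathbb{E}|a-b|=\mathbb{E}|a-c|$, which is impossible if $|a-b|<|a-c|$ almost surely. I would resolve this with the coarsening trick of \citet{GHMRC11}.
\begin{enumerate}
\item Write each value of $[q]$ as a pair (coarse bucket in $[t]$, fine offset in $[q/t]$).
\item Let $\mu$ make the anchor and positive land in the \emph{same} coarse bucket and the negative in an \emph{independent} bucket. For large $t$ this gives $|a-b|<|a-c|$ with probability $\ge1-O(1/t)$, so completeness holds.
\item Within buckets, choose the offsets so that the pair $(a,b)$ and the pair $(a,c)$ are indistinguishable to functions of low influence.
\end{enumerate}
Concretely, I would first map each long code through a random-looking reordering of the buckets, so that the bucket identity is visible only through high-influence structure; low-influence functions then effectively see only near-independent marginals. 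The technical heart is showing that, after this folding and noise, the Gaussian correlation matrices for $(X,Y)$ and $(X,Z)$ coincide up to $o_t(1)$. This would then imply the exchangeability used in the last soundness step, and it is the step I expect to require the most care, including adapting the ordering-CSP analysis of \citet{GHMRC11} to $\mathbb{R}^d$-valued embeddings.
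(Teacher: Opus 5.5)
\textbf{There is a genuine gap: the soundness argument fails for the test distribution you propose.} A simple degree-one function already exhibits the failure. Under your $\mu$, the anchor and positive share a coarse bucket in every coordinate not hit by noise, while the negative's bucket is independent. Let $\beta:[q]\to[t]$ be the bucket map, let $\phi:[t]\to\mathbb{R}$ be nonconstant, and set $f(w,x)=R^{-1/2}\sum_{i=1}^{R}\phi(\beta(x_i))\,e_1$ for every $w$. This $f$ ignores $w$ and is symmetric in the coordinates, so the permutations $\pi,\pi',\pi''$ are irrelevant. Every coordinate has influence $O(1/R)$, so no labeling can be decoded from it. Yet $f(w,x)-f(w',y)$ has variance $O(\delta)$ (only noisy coordinates contribute), while $f(w,x)-f(w'',z)$ has variance $\Theta(1)$. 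By the CLT, this $f$ therefore satisfies all but a fraction tending to $0$ with $\delta$ of the triplets, on every NO instance.

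Reordering the buckets cannot repair this. Any fixed relabeling built into the reduction is either undone by the embedding or absorbed into $\phi$. Moreover, the Gaussian analogue produced by the invariance principle depends only on the pairwise marginals $\mu_{12},\mu_{13},\mu_{23}$, and your coarsening makes $\mu_{12}$ and $\mu_{13}$ differ on purpose. So the claim that the correlation matrices of $(X,Y)$ and $(X,Z)$ agree up to $o_t(1)$ is false. What step~4 actually needs is $\mu_{12}=\mu_{13}$ exactly, with $\mu_{23}$ symmetric, while $|a-b|<|a-c|$ still holds with high probability. That calls for something like the cyclic-shift coupling of the MAS dictatorship test, not coarsening.

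Independently, step~1 is invalid as written. The predicate $\mathbf{1}[\|p-q\|<\|p-r\|]$ is not multilinear in $(p,q,r)$, so averaging the $\mathbb{R}^d$-valued $f(w,\cdot)$ over neighbours does not preserve the acceptance probability. \citet{GHMRC11} obtain multilinearity by bucketing a linear order into $q$ classes and working with indicator vectors. There is no analogous finite encoding for arbitrary $\mathbb{R}^d$ embeddings when every triplet has a different anchor.

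\textbf{The missing idea is that no new dictatorship test is needed.} The paper proves the theorem by a direct gap-preserving reduction from Maximum Acyclic Subgraph, using its UG-hardness with gap $(1-\varepsilon,\tfrac12+\varepsilon)$ from \citet{guruswami2008beating} as a black box. Given a digraph $G=(V,E)$, add a single anchor item $S$, and for each arc $u\to v$ add the triplet $(S,u,v)$. Because all triplets share the anchor, an embedding $f:V\cup\{S\}\to\mathbb{R}^d$ satisfies $(S,u,v)$ exactly when $\|f(u)-f(S)\|_2<\|f(v)-f(S)\|_2$. Hence its accuracy is at most the MAS value of the ordering of $V$ by distance to $f(S)$. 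Conversely, setting $f_\pi(S)=0$ and $f_\pi(v)=\pi(v)$ realizes any ordering $\pi$ on the line. So for every $d\ge 1$ the optimal accuracy equals the MAS optimum, the gap transfers verbatim, and YES instances are witnessed in one dimension. The common anchor reduces the geometry to a linear order of radii, which is exactly the structure your construction gives up by using a different anchor in every triplet.
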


\begin{proof}
We provide an approximation-preserving reduction from Maximum Acyclic Subgraph (MAS) to the Triplet Embeddings problem. In MAS, we are given a directed graph $G(V,E)$ and we want to find a permutation of the vertices $\pi:V\to\{1,\dots,|V|\}$, so as to maximize the number of directed edges $(u,v)\in E$ where $\pi(u) < \pi(v)$. Such edges are called satisfied, and we use $\mathrm{val}(\pi)$ to denote the fraction of satisfied edges by $\pi$ (where we normalize by $|E|$).

Observe that any MAS instance always admits a trivial solution such that at least $\tfrac{|E|}{2}$ directed edges are correctly oriented from left to right: simply output a random permutation on $V$. It is well-known by~\cite{guruswami2008beating,GHMRC11}, that under Unique Games~\citep{khot2002power}, instances of MAS are approximation resistant in the worst-case:

\begin{theorem}[\citet{guruswami2008beating}] Assuming Unique Games, for every $\epsilon>0$, it is NP-hard to distinguish between MAS instances from the following two cases:
\begin{enumerate}
    \item YES instance: There exists a permutation $\pi^*$ whose value $\mathrm{val}(\pi^*)\ge 1-\epsilon$.
    \item NO instance: For every permutation $\pi$, $\mathrm{val}(\pi)\le \tfrac12 + \epsilon$.
\end{enumerate}
    
\end{theorem}

Below we describe the gap reduction from MAS to triplet embeddings, and show how to map YES instances of MAS to YES instances of triplet embeddings, and similarly, NO instances of MAS to NO instances of triplet embeddings.


\paragraph{Gap Reduction.} Let $G=(V,E)$ be a directed graph which is the input to MAS. We construct a triplet instance $(U,\mathcal{T})$ as follows: 

\begin{itemize} 
\item \textbf{Items.} We create $|V|+1$ items by introducing a new distinguished anchor item $S$ and let $U \coloneqq V \cup \{S\}$.
\item \textbf{Triplet Comparisons.} We create $|\mathcal T|\coloneqq|E|$ triplets: for every directed edge $(u\to v)\in E$, we introduce the triplet $(S,u,v) \in \mathcal{T}$, i.e., all triplets are with respect to the same anchor item $S$. 
\end{itemize} 

Moreover, given any embedding $f:U\to\mathbb{R}^d$, we define the \emph{radius} of each vertex $v\in V$ to be its distance from the anchor: \[ r_f(v) \coloneqq \|f(v)-f(S)\|_2 \] 

Let $\pi_f$ be any total order of $V$ obtained by sorting vertices by increasing $r_f(v)$ (breaking ties arbitrarily). Then for every edge $(u\to v)\in E$, the triplet $(S,u,v)$  is satisfied by $f$ if and only if  $r_f(u) < r_f(v)$ which is equivalent to $\pi_f(u) < \pi_f(v)$.

Therefore, \[ \mathrm{acc}(f;\mathcal{T}) = \frac{|\{(u\to v)\in E : \pi_f(u)<\pi_f(v)\}|}{|E|}. \] Conversely, for any total order $\pi$ of $V$, define a one-dimensional embedding $f_\pi:U\to\mathbb{R}$ by \[ f_\pi(S)=0, \qquad f_\pi(v)=\pi(v)\quad\text{for all }v\in V. \] Then for every edge $(u\to v)\in E$, the triplet $(S,u,v)$ is satisfied by $f_\pi$ if and only if $\pi(u)<\pi(v)$. Hence, \[ \max_{f:U\to\mathbb{R}^d}\mathrm{acc}(f;\mathcal{T}) = \max_{\pi}\frac{|\{(u\to v)\in E:\pi(u)<\pi(v)\}|}{|E|}\] and the optimum is attained already in one dimension.
\end{proof}


\paragraph{Extensions.} We note that the same hardness of approximation holds for ordinal embeddings more generally, i.e., for the case of quadruplet comparisons of the form $(i,j,k,l)$ indicating that $\|f(i) - f(j)\|_2 < \|f(k) - f(l)\|_2$. This follows as triplet instances are special cases (replace every $(i,j,k)$ triplet with the quadruplet $(i,j,i,k)$).
\section{Experiments}


We provide synthetic experiments in two settings supporting Theorem~\ref{th:main1}, which predicts that when the embedding dimension $d$ falls below a constant fraction of the ground-truth dimension $D$, accuracy drops significantly and becomes comparable to the trivial $50\%$-baseline. Notice that in the context of text embeddings and retrieval, a similar drop was observed under aggressive dimension truncation in~\citep{takeshita2025randomly,tsukagoshi2025redundancy}. Here, we do not truncate the dimension, but rather optimize to find the best embedding under a constraint on its dimension $d$ (which we call the target dimension).

We consider two types of synthetic datasets:




\textbf{(1) Ground-truth Euclidean embeddings.}
We sample $n=1000$ points independently and uniformly from the unit sphere in $\mathbb{R}^D$, where $D \in \{128,256,512,1024\}$. 
Triplets $(i,j,k)$ are sampled uniformly without replacement and labeled by the ground-truth Euclidean distance. Here, we fix the number of triplets to $m = 10^6$. Observe, that even though this is a dense instance, by construction, these instances are realizable in dimension $D$ with perfect accuracy. As we vary $D$, and the target dimension $d$, we measure the triplet accuracy, i.e., the fraction of satisfied triplets by the found embedding (see also training procedure below).

\textbf{(2) Uniformly random triplets.}
We also generate instances with no explicit ground-truth by sampling $m=10^6$ triplets uniformly at random over $n=4000$ items, assigning each comparison independently. Here, we empirically check whether the instance is realizable: indeed, for sufficiently large embedding dimension $d$, gradient-based optimization is able to achieve $\mathrm{acc}=1$, indicating that these instances are realizable. 
This provides a second synthetic source of realizable triplet constraints without explicitly specifying a ground-truth geometry.

\paragraph{Embeddings and Training Procedure.}
For the embeddings, we consider two variants:
(i) \emph{Unconstrained embeddings}, where parameters are unconstrained; 
(ii) \emph{Spherical embeddings}, where after each optimization step all embeddings are projected onto the unit sphere. 
The latter setting resembles cosine-similarity–based representation learning.

For a fixed target embedding dimension $d$, we directly optimize embeddings 
$f:V\to\mathbb{R}^d$ using AdamW~\cite{loshchilov2018decoupled} with the standard hinge triplet loss~\cite{schroff2015facenet}:
\[
\mathcal{L} = \max{\left( 0, \|f(i) - f(j)\|_2^2 - \|f(i) - f(k)\|_2^2 + \gamma \right)}
\]
where the margin parameter is set to $\gamma=1$.

For each ground-truth dimension $D$ we consider from $\{128,256,512,1024\}$, we sweep the embedding dimension $d$ over powers of two in $[2,512]$ and include intermediate geometric means between consecutive powers to obtain finer resolution.
   
\paragraph{Plots.} In both Figure~\ref{fig:hinge_loss_accuracy} (both top and bottom) and Figure~\ref{fig:hinge_loss_accuracy_random_euclidean}, the y-axis is the triplet accuracy defined in Def.~\ref{def:acc}, and the x-axis has various target dimensions $d$. In all plots, the accuracy drop close to the random baseline, appears to take place for sufficiently small target dimension. For example, our theory predicts that with $\tfrac d D \approx 5\%$, we have $\epsilon^2\approx 5\%$, so $\epsilon \approx 22\%$ which agrees with the observed behavior (accuracy $\approx \tfrac 12+\epsilon$).












\begin{figure}[ht]
    \centering
    \includegraphics[width=0.9\linewidth]{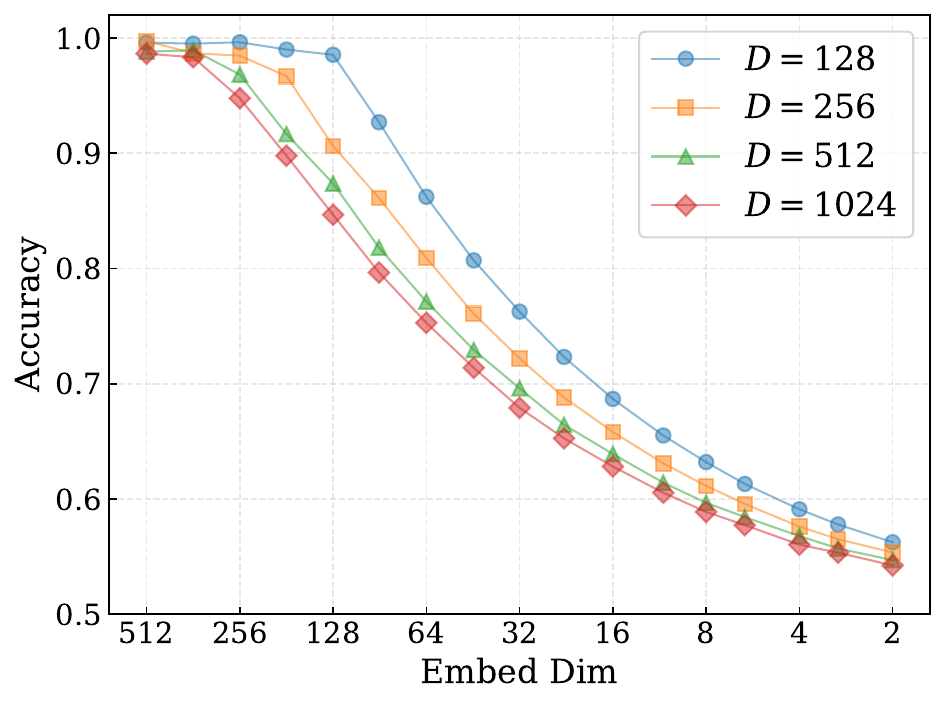}
    \includegraphics[width=0.9\linewidth]{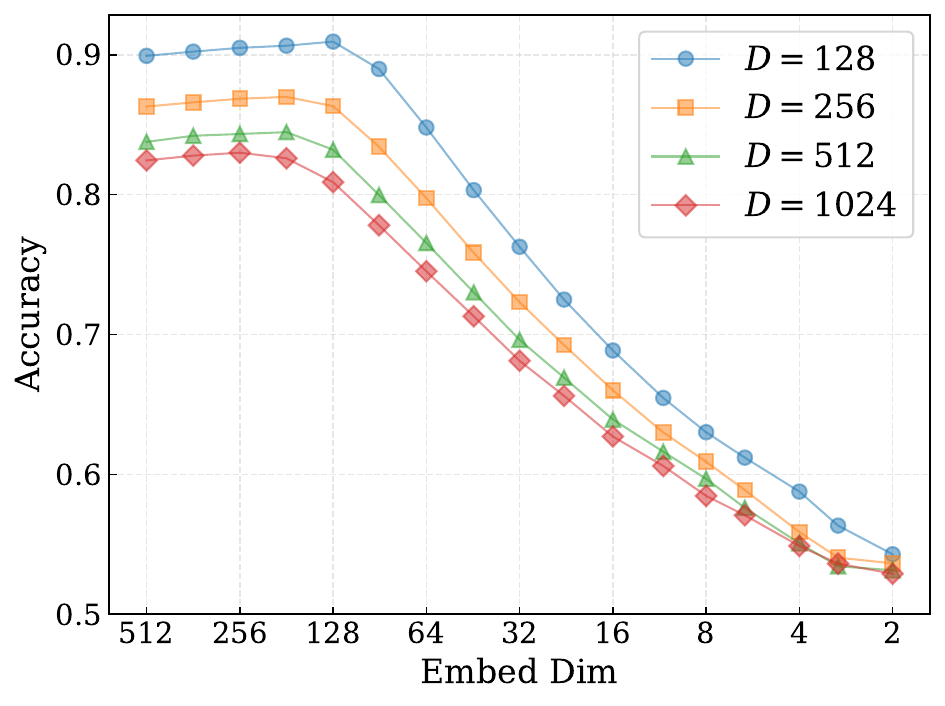}
    \caption{Ground-truth Euclidean embeddings: triplet accuracy under various ground-truth dimensions $D$ and embedding dimensions $d$, with $n=1000$ points and $m=10^6$ triplets. \textbf{Top:} unconstrained embeddings. \textbf{Bottom:} spherical embeddings.}
    \label{fig:hinge_loss_accuracy}
\end{figure}


\begin{figure}[t!]
    \centering
    \includegraphics[width=0.9\linewidth]{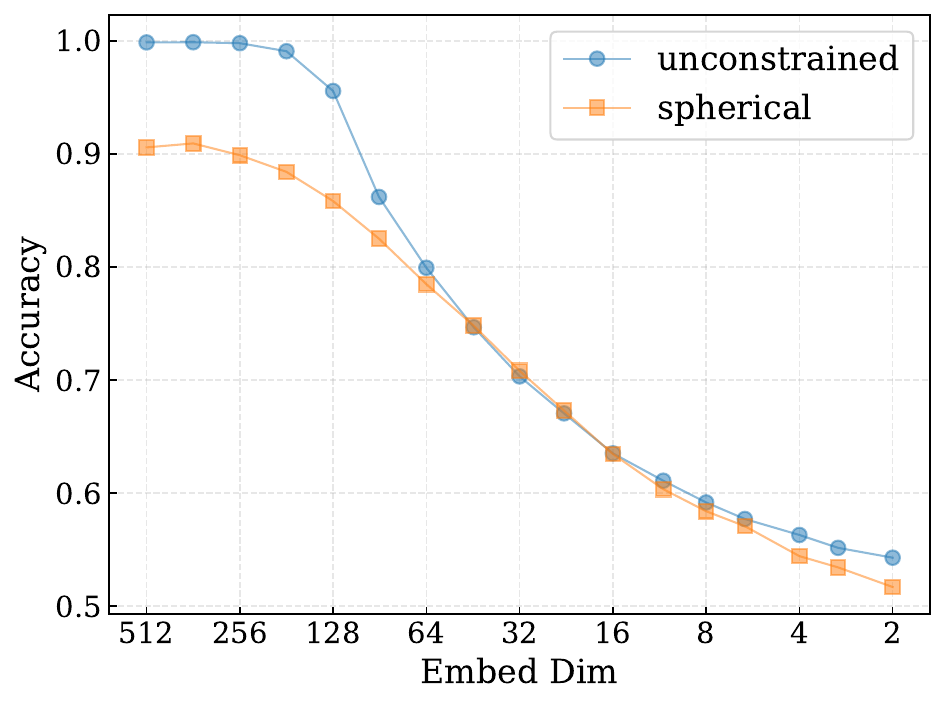}
    \caption{Uniformly random triplets: triplet accuracy under various embedding dimensions $d$. We fix the number of points $n=4000$ and the number of triplets $m=10^6$.}
    \label{fig:hinge_loss_accuracy_random_euclidean}
\end{figure}














\section*{Conclusion}
We investigated the fundamental role of dimensionality in embedding-based representations. Our main results are both information-theoretic and computational. We first examined standard contrastive embedding tasks based on triplet comparisons and proved that dimensionality acts as a bottleneck for accuracy: even when triplet constraints are perfectly realizable in high dimension, reducing the embedding dimension below a constant fraction of the ground-truth dimension can force accuracy to collapse down to the trivial $\tfrac12$-baseline. We further demonstrated that this limitation persists at the algorithmic level, establishing strong hardness of approximation results under standard complexity assumptions. It would be very interesting to understand how imposing extra structure on the input, e.g., separability conditions, large-margin assumptions, may allow us to bypass negative results, and obtain significantly better-than-random accuracy, even in low dimensions (independent of $n$).

\section*{Acknowledgments}
We would like to thank the anonymous ICML reviewers for their useful feedback and suggestions. We would like to thank Konstantin Makarychev for valuable discussions throughout this project. D. Arvanitakis was supported by NSF Awards CCF-1955351 and EECS-2216970, and in part
by grants from the NSF (DMS-2235451) and the Simons Foundation (MPS-NITMB-00005320) to the NSF-Simons
National Institute for Theory and Mathematics in Biology (NITMB). V. Chatziafratis and Y. Luo were supported by a UC Santa Cruz start-up grant and by Hellman’s fellowship.



\newpage
\section*{Impact Statement}
This paper presents work whose goal is to advance the field of Machine
Learning. There are many potential societal consequences of our work, none
which we feel must be specifically highlighted here.

\bibliography{main}
\bibliographystyle{icml2026}

\newpage
\appendix
\onecolumn
\section*{Appendix}
\appendix

\section{Omitted Proofs}
\label{section: appendix_omitted proofs}
\begin{proof}[Proof of Lemma \ref{lem: arboricity bound}]
     We show that $\max_{H}\frac{|E(H)|}{|H|-1}\leq 4\alpha_n$, applying the ceiling function, we have that $\rho(G)=\left\lceil\max_{H}\frac{|E(H)|}{|H|-1}\right\rceil\leq \lceil4\alpha_n\rceil\leq 4\alpha_n+1\leq 5\alpha_n$. We have, by the union bound that:
     \begin{align*}
        &\Pr{\exists H: \frac{|E(H)|}{|H|-1}\geq 4\alpha_n}
         \\&\leq \sum_{k=2}^{n}\Pr{\exists H: |H|=k \text{ and  } E(H)\geq 4(k-1)\alpha_n}.
     \end{align*}
     Fix $k$ and $H\subseteq V$ with $|H|=k$ and observe that the number of edges in $H$ follows a Poisson random variable with parameter $\binom{k}{2}\cdot\lambda$, applying the Chernoff bound for Poisson tails (Exercise 2.3.3 in \cite{vershynin}) we have that:
     \begin{align*}
         \Pr{|E(H)|\geq 4(k-1)\alpha_n}
         &\leq \left(\frac{ek}{8n}\right)^{2k},
     \end{align*}
    where we have used that $\alpha_n\geq1$ and $k\geq 2$. Taking the union bound over all subset of $V$ of size $k$, we have that:
    \begin{align*}
        &\Pr{\exists H: |H|=k \text{ and  } E(H)\geq 4(k-1)\alpha_n}\\&\leq\left(\frac{en}{k}\right)^k\left(\frac{ek}{8n}\right)^{2k}\\
        &\leq \left(\frac{e^2k}{8n}\right)^{k}. 
    \end{align*}
It remains to show that $\sum_{k=2}^{n}\left(\frac{e^2k}{8n}\right)^{k}=o(1)$. We have that:
\begin{align*}
    \sum_{k=2}^{n}\left(\frac{e^2k}{8n}\right)^k&=\sum_{k=2}^{\sqrt{n}}\left(\frac{e^2k}{8n}\right)^k+\sum_{k=\sqrt{n}+1}^{n}\left(\frac{e^2k}{8n}\right)^k\\
    &\leq\sum_{k=2}^{\sqrt{n}}\frac{1}{n^2}+n\left(\frac{e^2}{8}\right)^{\sqrt{n}}\\
    &=o(1)
\end{align*}

\end{proof}
\section{Extension to quadruplet comparisons}
\label{sec: quadruplets}
In this section we describe how our results can be extended to the problem of quadruplet comparisons (also called geometric quartets). We again assume that there is a set of elements $V$ with $|V|=n$ and we are given a set of $m$ geometric quartet constraints of the form $(x,y,z,w)$. The goal is to embed the elements of $V$ through a map $f$ to  $\mathbb{R}^{d}$ so as to satisfy as many quartet constraints as possible. We say that $f$ satisfies a constraint $(x,y,z,w)$ if $\|f(x)-f(y)\|< \|f(z)-f(w)\|$, we say that an instance is satisfiable in $d$ dimensions if there is an $f:V\to \mathbb{R}^{d}$ that satisfies all the constraints. We again consider random instances sampled from a distribution $\mathcal{I}_{Q}(n,m)$ consisting of $m$ constraints, $(x_i,y_i,z_i,w_i)$ selected uniformly at random in $V^4$. 

\paragraph{Realizability} We consider an instance with $m=Dn$ constraints and $D=o(n)$. We begin by observing that the algorithm of \cite{avdiukhin2024embedding} can be extended to work for the quadruplet reconstruction problem. The only difference is in the construction of the constraint graph $G_c$. In this setting, for a constraint of the form $(x,y,z,w)$ we add the edges $\set{x,y}$ and $\set{z,w}$, we can then proceed with the algorithm of \cite{avdiukhin2024embedding}, which gives us that the instance can be satisfied in $4\rho(G_c)$ dimensions given that the instance is satisfiable in $n$ dimensions.

  We turn our attention to an instance $\mathcal{I}_Q(n,\lambda)$, $\lambda=\frac{D}{n^3}=o(\frac{1}{n^2})$ where for every $x,y,z,w\in V$ the occurrences of constraint $(x,y,z,w)$ follows a Poisson with parameter $\lambda$. In that case, the constraint graph follows the distribution $\mathcal{G}(n,2\lambda')$ with $\lambda'\approx4\frac{D}{n}$, which gives us, by Lemma \ref{lem: arboricity bound} that the arboricity is at most $\Theta(\lambda'n)=\Theta(D)$. We can thus conclude that if the instance is satisfiable in $n$ dimensions then it is also satisfiable in $\Theta(D)$ dimensions. It therefore, remains to show that the instance is satisfiable in $n$ dimensions. By Lemma 3 in \cite{bilu2005monotone} this is reduced to showing that a random directed graph $G$ where the set of vertices is $\binom{[n]}{2}$ and every directed edge $(\set{i,j},\set{k,l})$ occurs $\Poi(\lambda)$ number of times contains no directed cycle.  This is proven in the following lemma.
\begin{lemma}
\label{lem: sat of mas}
    Let $G$ be a random graph with vertex set $\binom{[n]}{2}$ and where for every $e=(\set{i,j},\set{k,l})$ the directed edge occurs $X_e\sim \Poi(\lambda)$, with $\lambda=o(\frac{1}{n^2})$
\end{lemma}
\begin{proof}
    Let $X$ be the number of directed cycles in $G$, we have that:
    \begin{align*}
         \Pr{G \text{ contains a directed cycle}}
        &= \Pr{X>0}\\
        &\leq \Exp{X},
    \end{align*}
    where we have used the first moment method (Markov's inequality). We now let $X_k$ to denote the number of directed cycles of size $k$, $X=\sum_{k=2}^{n}X_k$, for the expectation of $X_k$ we have, by linearity of expectation:
    \begin{align*}
        \Exp{X_k}={\binom{\binom{n}{2}}{k}}(k-1)! \left(1-e^{-\lambda}\right)^{k},
    \end{align*}
    where ${\binom{\binom{n}{2}}{k}}(k-1)!$ is the total number of directed cycles and $\left(1-e^{-\lambda}\right)^{k}$ is the probability that a particular directed cycle is realized in the graph. This in turn, using that $1-e^{-\lambda}=\lambda+O(\lambda^2)$, gives us that:
    \begin{align*}
        \Exp{X}&\leq \sum_{k=2}^{\binom{n}{2}}{\binom{\binom{n}{2}}{k}}(k-1)!(\lambda+O(\lambda^2))^k\\
        &\leq\sum_{k=2}^{\binom{n}{2}}\left((\lambda+O(\lambda^2))n^2\right)^{k}\\
        &\leq ((\lambda+O(\lambda^2))n^2)^2\frac{1}{1-(\lambda+O(\lambda^2))n^2}\\
        &=o(1),
    \end{align*}
    where we have used that ${\binom{\binom{n}{2}}{k}}(k-1)!\leq n^{2k}$

    \paragraph{Accuracy collapse} To show accuracy collapse for quadruplet reconstruction, we observe that the proof of \cite{alon2024optimal} applies verbatim and yields a VC-dimension of $\Theta(dn)$, where $d$ is the embedding dimension. We then follow the proof of Lemma~\ref{lem: lower bound}.
\end{proof}

\end{document}
